\newcommand{\editnew}[1]{{#1}} 
\def\real    { \mathbb{R} }
\newtheorem{cor}{Corollary}
\newtheorem{definition}{Definition}
\newtheorem{theorem}{Theorem}
\newtheorem{lem}{Lemma}
\theoremstyle{remark}
\newcommand{\bitem}{\begin{itemize}}
\newcommand{\eitem}{\end{itemize}}
\newcommand{\supp}{\mathrm{supp}}
\newcommand{\beqn}{\begin{equation}}
\newcommand{\eeqn}{\end{equation}}
\newcommand{\balign}{\begin{align}}
\newcommand{\ealign}{\end{align}}
\newcommand{\inner}[1]{\left<#1\right>}
\newcommand\numberthis{\addtocounter{equation}{1}\tag{\theequation}}
\def\x{{\mathbf x}}
\def\y{{\mathbf y}}
\def\a{{\mathbf a}}
\def\u{{\mathbf u}}
\def\z{{\mathbf z}}
\def\z{{\mathbf z}}
\def\A{{\mathbf A}}
\def\F{{\mathbf F}}
\def\H{{\mathbf H}}
\def\S{{\mathbf S}}
\def\bI{{\mathbf I}}
\def\I{{\mathbf I}}
\def\Pr{{\mathbb P}}
\newcommand{\oper}[1]{\mathcal{#1}}
\def \C {\mathbb{C}}
\DeclareMathOperator*{\argmin}{arg min}
\DeclareMathOperator*{\trace}{tr}
\newcommand\blfootnote[1]{
  \begingroup
  \renewcommand\thefootnote{}\footnote{#1}
  \addtocounter{footnote}{-1}
  \endgroup
}
\renewcommand{\ss}{{\vspace*{-1mm}}}
\begin{document}

\title{Constrained adaptive sensing}

\author{Mark~A.~Davenport,
        Andrew~K.~Massimino,
        Deanna~Needell,
        and~Tina~Woolf}
        
\maketitle
       
\begin{abstract} Suppose that we wish to estimate a vector $\x \in \C^n$ from a small number of noisy linear measurements of the form $\y = \A \x + \z$, where $\z$ represents measurement noise.  When the vector $\x$ is sparse, meaning that it has only $s$ nonzeros with $s \ll n$, one can obtain a significantly more accurate estimate of $\x$ by adaptively selecting the rows of $\A$ based on the previous measurements provided that the signal-to-noise ratio (SNR) is sufficiently large. In this paper we consider the case where we wish to realize the potential of adaptivity but where the rows of $\A$ are subject to physical constraints.  In particular, we examine the case where the rows of $\A$ are constrained to belong to a finite set of allowable measurement vectors. We demonstrate both the limitations and advantages of adaptive sensing in this constrained setting. We prove that for certain measurement ensembles, the benefits offered by adaptive designs fall far short of the improvements that are possible in the unconstrained adaptive setting.  On the other hand, we also provide both theoretical and empirical evidence that in some scenarios adaptivity does still result in substantial improvements even in the constrained setting. To illustrate these potential gains, we propose practical algorithms for constrained adaptive sensing by exploiting connections to the theory of optimal experimental design and show that these algorithms exhibit promising performance in some representative applications.
\end{abstract}

\blfootnote{M.~A.~Davenport and A.~K.~Massimino are with the School
of Electrical and Computer Engineering, Georgia Institute of Technology, Atlanta,
GA, 30332 USA (e-mail: mdav@gatech.edu, massimino@gatech.edu).}
\blfootnote{D.~Needell is with the Department of Mathematical Sciences, Claremont McKenna College, Claremont, CA, 91711 USA (e-mail: dneedell@cmc.edu).}
\blfootnote{T.~Woolf is with the Institute of Mathematical Sciences, Claremont Graduate University, Claremont, CA, 91711 USA (e-mail: tina.woolf@cgu.edu).}
\blfootnote{The work of M.~A.~Davenport and A.~K.~Massimino was supported by grants NRL N00173-14-2-C001, AFOSR FA9550-14-1-0342, NSF CCF-1409406 and CCF-1350616, and gifts from LexisNexis Risk Solutions and Lockheed Martin. The work of D.~Needell and T.~Woolf was
partially supported by NSF Career DMS-1348721 and the Alfred P. Sloan Foundation. }

\section{Introduction}

Suppose that we wish to estimate a sparse vector from a small number of noisy
linear measurements. In the setting where the measurements are selected in
advance (independently of the signal) we now have a rich understanding of both
practical algorithms and the theoretical limits on the performance of these
algorithms. A typical result from this literature states that for a suitable
measurement design, one can estimate a sparse vector with an accuracy that
matches the minimax lower bound up to a constant factor~\cite{CandeD_How}.
Such results have had a tremendous impact in a variety of practical settings.
In particular, they provide the mathematical foundation for ``compressive
sensing,'' a paradigm for efficient sampling that has inspired a range of new sensor designs over the last decade.

A distinguishing feature of the standard compressive sensing paradigm is that the measurements are {\em nonadaptive}, meaning that a fixed set of measurements are designed and acquired without allowing for any possibility of adapting as the measurements begin to reveal the structure of the signal. While this can be attractive in the sense that it enables simpler hardware design, in the context of sparse estimation this also leads to some clear drawbacks.  In particular, this would mean that even once the acquired measurements show us that portions of the signal are very likely to be zero, we may still expend significant effort in ``measuring'' these zeros! In such a case, by {\em adaptively} choosing the measurements, dramatic improvements may be possible.

Inspired by this potential, 
recent investigations have shown that we can often acquire a sparse (or
compressible) signal via far fewer measurements or far more accurately if we
choose them adaptively~(e.g.,
see~\cite{DavenA_Compressive,IndykPW_Power,HauptCN_Distilled2,MalloN_Nearb}).
This body of work, which will be discussed in greater detail in Section~\ref{sec::adaptive}, demonstrates that adaptive sensing indeed offers the potential for dramatic improvements over nonadaptive sensing in many settings. However, the existing approaches to adaptive sensing, which rely on being able to acquire {\em arbitrary} linear measurements, cannot be applied in most real-world applications where the measurements must respect certain physical {\em constraints}.  In this paper, our focus is on {\em constrained adaptive sensing}, where our measurements are restricted to be chosen from a particular set of allowable measurements. We will see that new algorithms are required and explore the theoretical limits within this more restrictive setting.  Before describing the constrained adaptive setting in more detail, we first provide a brief review of existing approaches to nonadaptive and adaptive sensing of sparse signals.

\subsection{Nonadaptive sensing} \label{sec::nonadaptive}

In the standard nonadaptive compressive sensing
framework~\cite{CandeRT_Robust,Cande_Compressive,Donoh_Compressed,DavenDEK_Introduction},
we acquire a signal $\x$ via the linear measurements $\y = \A \x + \z$, where
$\A$ is an $m \times n$ matrix representing the sensing system and $\z$
represents measurement noise. The goal is to design $\A$ so that $m$ is
smaller than $n$ by exploiting the fact that $\x$ is {\em sparse} (or nearly sparse).  
Given a basis 
$\mathbf{\Psi}$, we say that a signal $\x \in \mathbb{C}^n$ is $s$-sparse if it can be represented by a linear combination of just $s$ elements from $\mathbf{\Psi}$, i.e., we can write $\x = \mathbf{\Psi} \boldsymbol{\alpha}$, with $\|\boldsymbol{\alpha}\|_0 \le s$, where $\|\boldsymbol{\alpha}\|_0 := |\mbox{supp}(\boldsymbol{\alpha})|$ denotes the number of nonzeros in $\boldsymbol{\alpha}$.  We will typically be interested in the case where $s \ll n$.

There is now a rich literature that describes a wide range of techniques for
designing an appropriate $\A$ and efficient algorithms for recovering $\x$.
In much of this literature, the matrix $\A$ is chosen via randomized
constructions that are known to satisfy certain desirable properties such as
the so-called {\em restricted isometry property} (RIP).\footnote{See Section~\ref{sec::lower bounds} for a more detailed discussion of the RIP
and its implications in the context of adaptive sensing. Note that the RIP is typically stated to require $\|\A \x\|_2 \approx \| \x \|_2$ for all $s$-sparse $\x$, which implies a fixed scaling for the matrix $\A$ where $\|\A\|_F^2 \approx n$. To ease the comparison with results that arise in contexts with alternative scalings, in the result stated in~\eqref{eq:nonadaptive} we make no assumption on the scaling of $\A$ and merely require $\|\A \x\|_2 \approx \beta \| \x \|_2$ for some $\beta>0$.} Under the assumption
that $\A$ satisfies the RIP (or that $\A \mathbf{\Psi}$ satisfies the RIP in the case where $\mathbf{\Psi} \neq \I$), if each entry of $\z$ is independent white Gaussian noise with variance $\sigma^2$ then one can show that techniques based on $\ell_1$-minimization 
produce an approximation $\widehat{\x}$ satisfying
\begin{equation} \label{eq:nonadaptive}
\mathbb{E} \| \widehat{\x} - \x \|_2^2 \le C \frac{n \log n}{\| \A \|_F^2} s \sigma^2,
\end{equation}
where $C > 1$ is a fixed constant (e.g., see~\cite[pp.\ 35]{DavenDEK_Introduction}). Note that this bound holds for {\em any} $\x$, and hence any SNR (even the worst-case). It is possible to obtain improved bounds that eliminate the $\log n$ factor when one assumes that the SNR is sufficiently large to ensure that the support is exactly recovered.

One can show that this result is essentially optimal  in the sense that there is no alternative method to choose $\A$ or perform the reconstruction that can do better than this (up to the precise value of the constant $C$)~\cite{CandeD_How}.  In the event that the signal $\x$ is not exactly $s$-sparse, it is also possible to extend these results by introducing an additional term in the error bound that measures the error incurred by approximating $\x$ as $s$-sparse.  See \cite{DavenDEK_Introduction} and references therein for further details.

\subsection{Adaptive sensing}\label{sec::adaptive}

A defining feature of the approach described above is that it is completely
nonadaptive.  When we consider the effect of noise, this nonadaptive approach
might draw some severe skepticism. To see why, note that in the nonadaptive
scenario, most of the ``sensing energy'' is used to measure the signal at
locations where there is no information, i.e., where the signal vanishes.
Specifically, one consequence of using the randomized constructions for $\A$ typically considered in the literature, or alternatively, any matrix satisfying the RIP,  is that the
available sensing energy (i.e.,  $\|\A\|_F^2$)  is evenly distributed across all possible indices.  This is natural since {\em a priori} we do not know where the nonzeros may lie, however, since most of the coordinates $\x_j$ are zero, it also means that the vast
majority of the sensing energy is seemingly wasted.
In other words, by design, the sensing vectors are approximately orthogonal to the signal, yielding a poor signal-to-noise ratio (SNR).

The idea behind adaptive sensing is that we should focus our sensing energy on locations where the signal is nonzero in order to increase the SNR, or equivalently, not waste sensing energy. In other words, one should try to learn as much as possible about the signal while acquiring it in order to design more effective subsequent measurements. Roughly speaking, one would like to $(i)$ detect those entries which are nonzero or significant, $(ii)$ progressively concentrate the sensing vectors on those entries, and $(iii)$ estimate the signal from such localized linear functionals.
Such a strategy is employed by the {\em compressive binary search} and
\emph{compressive adaptive sense and search} strategies
of~\cite{DavenA_Compressive} and~\cite{MalloN_Nearb}.  These
algorithms operate by examining successively smaller
pieces of the signal to accurately determine
the locations of signal energy.
These techniques can yield dramatic improvements in recovery accuracy.

To quantify the potential benefits of an adaptive scheme, suppose that we observe
\begin{equation} \label{eq:measmodel}
y_i = \langle \a_i, \x \rangle + z_i
\end{equation}
where the $z_i$ are independent and identically distributed (i.i.d.) $\mathcal{N}(0,\sigma^2)$ entries and the $\a_i$
are allowed to depend on the measurement history $((y_1, \a_1),\cdots,(y_{i-1}, \a_{i-1}))$, with the only constraint being that $\sum_i \|\a_i\|_2^2 = \| \A \|_F^2$ is fixed.   Consider a simple procedure that uses half of the sensing energy in a nonadaptive way to identify the support of an $s$-sparse vector $\x$ and then adapts to use the remaining half of the sensing energy to estimate the values of the nonzeros.  If such a scheme identifies the correct support, then it is easy to show that this procedure can yield an estimate satisfying
\begin{equation} \label{eq:adaptPotential}
\mathbb{E} \|\widehat{\x} - \x\|_2^2 = \frac{2s}{\|\A\|_F^2} \, s  \sigma^2 .
\end{equation}
If we contrast this result to that in~\eqref{eq:nonadaptive}, which represents the best possible performance in the nonadaptive setting, we see that this simple adaptive scheme can potentially improve upon the nonadaptive scheme by a factor of roughly
$ (n / s) \log n$, which represents a {\em dramatic} improvement in the typical scenario where $s \ll n$.  Of course, this is predicated on the assumption that the first stage of support identification succeeds, which is not always the case.

A fundamental question is thus: {\em in practice, how much lower can the mean squared error (MSE) be when we are allowed to sense the signal adaptively}? The answer is a subtle one.
In~\cite{AriasCD_Fundamental}, it is shown that  there is a fixed constant $C>0$ such that
\begin{equation} \label{eq:adaptLB}
\inf_{\widehat{\x}} \sup_{\|\x \|_0 \leq s} \mathbb{E} \| \widehat{\x}-\x\|_2^2 \ge C \frac{n}{\|\A\|_F^2} \, s  \sigma^2.
\end{equation}
\editnew{In other words, for even the best possible adaptive scheme there are $s$-sparse vectors for which our recovery error is bounded below by \eqref{eq:adaptLB}.}
This lower bound improves upon the nonadaptive performance~\eqref{eq:nonadaptive} by only a factor of $\log n$, coming far short of the improvement that~\eqref{eq:adaptPotential} indicates might be possible.  Similar results are also obtained in~\cite{Castr_Adaptive}.
\editnew{These results are established by considering vectors that are so difficult to estimate that it is impossible to obtain a reliable estimate of their support, and so adaptive algorithms offer limited room for improvement over nonadaptive ones.}

The result (\ref{eq:adaptLB}) does not say that adaptive sensing {\em never} helps. In fact, in practice it {\em almost always} does help.  For example, when some or most of the nonzero entries in $\x$ are only slightly larger than the worst-case amplitude identified in~\cite{AriasCD_Fundamental}, we {\em can} detect them sufficiently reliably to enable the dramatic improvements predicted in~\eqref{eq:adaptPotential}.
More concretely, provided that $\sigma^2$ is not too large\footnote{For example, the compressive binary search procedure
proposed in \cite{DavenA_Compressive}
succeeds in finding the location of the smallest nonzero entry of amplitude
$\mu$ with
probability $1-\delta$ when $\mu^2/\sigma^2 > 16n\log(\frac{1}{2\delta}+1)/\|A \|_F^2$. The result for the procedure  in~\cite{MalloN_Nearb} is similar.}
relative to the nonzero entries of $\x$,
a well-designed adaptive scheme, where the $\a_i$  are chosen sequentially as in~\cite{DavenA_Compressive,MalloN_Nearb}, can achieve
\begin{equation} \label{eq:adaptive}
\mathbb{E} \| \widehat{\x} - \x \|_2^2 \le C' \frac{s}{\|\A\|_F^2} s \sigma^2
\end{equation}
for a fixed constant $C'$, which represents an enormous improvement when $s \ll n$, and demonstrates that the potential benefits suggested in (\ref{eq:adaptPotential}) can be realized in certain regimes.

We briefly note that these results are somewhat reminiscent of classical results from the field of information based complexity~\cite{bakhvalov_operApprox,woznia_Survey,Novak_Power,Donoh_Compressed} as well as more recent results in active learning~\cite{CastrW_Faster}.  Although this literature considers different observation models (e.g., noise-free observations of non-sparse signals), the general theme is that adaptivity is beneficial only in certain regimes (e.g., see~\cite{IndykPW_Power}).  In another direction, we also note that several authors have previously suggested Bayesian approaches to adaptive sensing that are highly relevant to the problems we study in this paper, but which currently lack much in the way of theoretical justification or understanding~\cite{CastrHNR_Finding,JiXC_Bayesian,NickiS_Compressed}.

\begin{figure*}[t]
\centering
\begin{tabular}{cc}
\includegraphics[height=2.3in,width=3in]{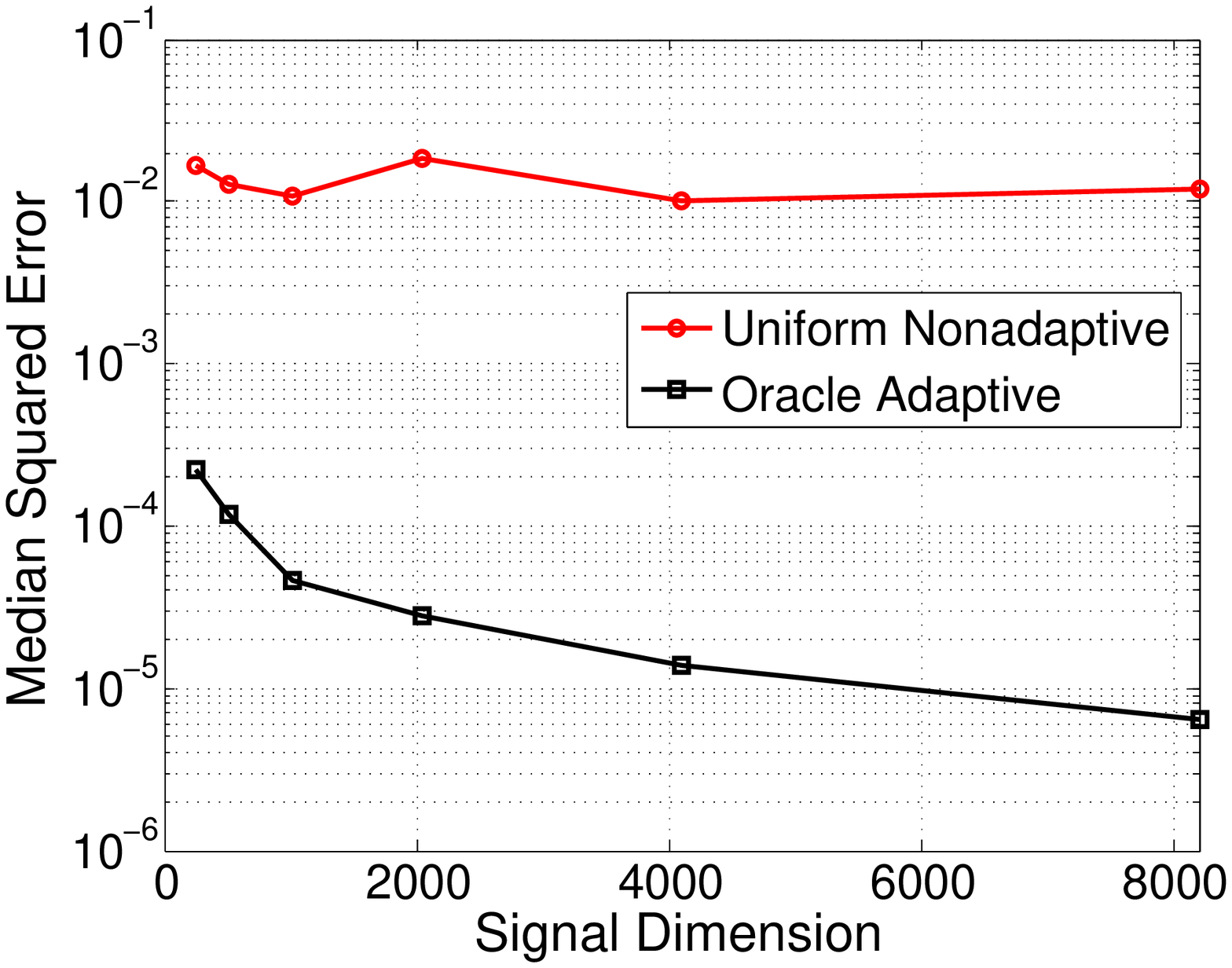} &
\includegraphics[height=2.3in,width=3in]{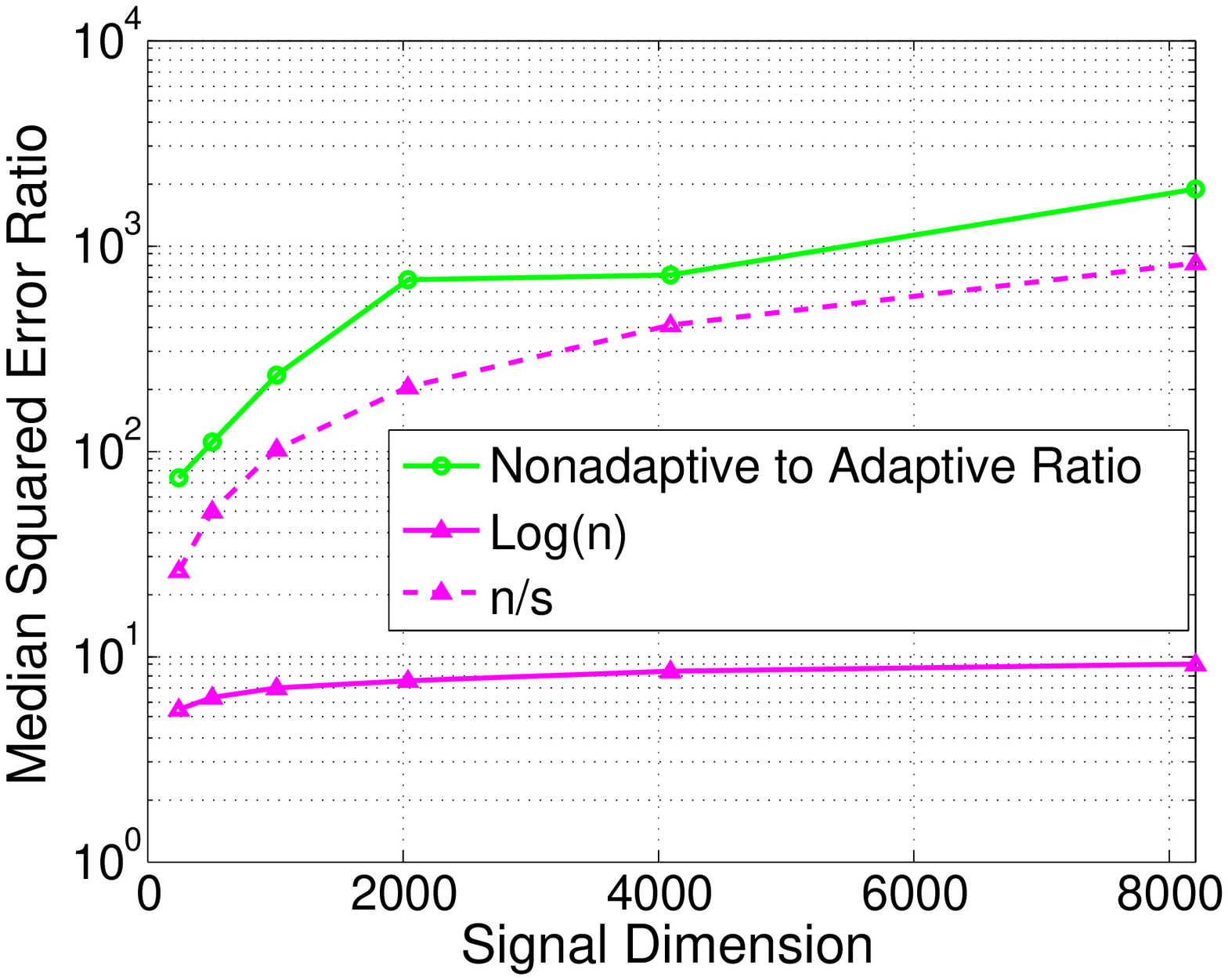}
\end{tabular}
\caption{\small{(Left) The median squared error versus the signal dimension
$n$ for nonadaptive recovery with uniformly random selected measurements (red)
and oracle adaptive recovery (black). (Right) The ratio (green) of the
nonadaptive median squared recovery error to the oracle adaptive median squared recovery error versus the signal dimension $n$, with $\log n$ (solid magenta) and $n/s$ (dashed magenta) included for reference.  }}
\label{fig::UnifVSOracle}
\end{figure*}

\subsection{Constrained sensing} \label{sec::constrained adaptive}

Up to this point, we have discussed results in which we essentially have complete
freedom to design both the adaptive and nonadaptive measurements in an optimal
fashion (that is, up to a constraint on $\|\A\|_F$).  However, there are many applications where such freedom does not exist, and there are significant constraints on the kind of measurements that we can actually acquire. Such constraints arise in various hardware devices inspired by compressive sensing. For example, the single-pixel camera~\cite{DuartDTLSKB_Single} acquires samples of an image by computing inner products with binary patterns.  In this application we could still utilize adaptive measurements, but they must be binary. 
In other applications, we may be restricted to obtaining point samples of the
signal of interest. For example, in standard sampling systems we are
restricted to individually measuring each signal coefficient over time or space. Finally, in tomography and magnetic resonance imaging (MRI), as well as other medical imaging settings, we cannot acquire inner products with arbitrary linear functionals---we are limited to Fourier measurements.

In all of these settings, the measurements are \textit{constrained}; we still
have the flexibility to design measurements adaptively, but we can only select
measurements from a fixed ensemble of predetermined measurements. Thus,
the constrained setting will typically preclude the use of any of the adaptive
sensing algorithms referenced above, and a new approach is required. Specifically, if we
let $\mathcal{M} \subset \C^n$ denote the set of candidate measurement
vectors, then the constrained adaptive sensing problem becomes one of
sequentially selecting the rows $\a_i$ of our sensing matrix from the set
$\mathcal{M}$. In
this work, we assume the multiplicity of a particular measurement from
$\mathcal{M}$ is allowed to be greater than one; that is, repeated measurements are permitted. 
For the methods discussed in this paper, we will restrict our attention to the case where
$\mathcal{M}$ is a finite set.
For a majority of our discussion and examples, we will focus on
the setting where $\mathcal{M} = \{{\bf f}_1,{\bf f}_2,\dots,{\bf f}_n\}$ \editnew{ consists of rows from the Discrete Fourier Transform (DFT) matrix. We stress, however, that we need not require $|\mathcal{M}| = n$
in general.}

With \editnew{the restriction that the measurements be chosen from the DFT ensemble}, Figure~\ref{fig::UnifVSOracle} illustrates the large potential difference between a completely
nonadaptive sensing scheme, where the measurements are selected uniformly at
random, and an ``oracle'' adaptive sensing scheme which uses a priori knowledge of the true locations of the nonzeros in a signal to carefully adapt the choice of measurement vectors to minimize the expected recovery error using the strategy outlined in Section~\ref{sec::opt exp des}.
In both cases, the Compressive Sampling
Matching Pursuit (CoSaMP)~\cite{NeedeT_CoSaMP} algorithm is used for the
signal recovery.
The median\footnote{
Note that the {\it median} and {\it mean} curves exhibit the same overall behavior; however, we display the {median} error across all trials rather than the {mean} error throughout because the median, being a more robust measure, resulted in smoother curves with clearer trends between the methods.} squared error over 200
trials is displayed against the signal dimension $n$. \editnew{Here, the signal is
chosen to have a sparse Haar wavelet decomposition that is supported on a tree.
The choice of a tree-sparse signal is
motivated by the observation that natural images typically have a structured
sparsity pattern in a wavelet domain due to correlations between scales.
In these simulations, the noise level $\sigma^2 = 10^{-4}$ is held constant while the nonzero coefficients scale as $\sqrt{n}$ so that the per-measurement SNR is fixed. The number of measurements taken is set to be $m= 0.6n$ (rounding when necessary).
See Section \ref{subsec:simulations} for further details regarding these simulations.}

It is well-known that the DFT and Haar wavelet transforms are not
incoherent, which implies that $\A {\bf \Psi}$ should not satisfy the RIP; hence, we would
not expect blind nonadaptive sensing to do well in this setting. However, Figure~\ref{fig::UnifVSOracle} does illustrate the large potential for improvement over nonadaptive sensing.  In this case, the adaptive algorithm can potentially improve the recovery error over nonadaptive sensing by roughly a factor of $n/s$, which represents a substantial gain when $s \ll n$.  While we will see below that there are also nonadaptive strategies to address the coherence of Fourier and Haar which somewhat reduce the gap between adaptive and nonadaptive sensing in this case, we believe that this clearly illustrates the potential for adaptive sensing, even in the constrained setting.

\subsection{Organization} The remainder of the paper is organized as follows.
In Section \ref{sec::lower bounds}, we show a simple lower bound on the
adaptive performance of systems limited to \editnew{DFT}
measurements.  We then generalize this result to the larger class of
measurements satisfying the RIP. In both cases, the signal is assumed to be
sparse in the canonical basis. In Section \ref{sec::opt exp des}, we give a
method for measurement selection based on optimal experimental design. In
Section \ref{sec::adapt tomo}, we provide simulations in a more realistic
setting and display numerical results when Fourier measurements are used and
the signal is assumed to be sparse in the Haar wavelet basis,
for both synthetic and realistic signals. We also present some analytical justification using 1-sparse signals in this constrained adaptive setting. Finally, we conclude in Section~\ref{sec::summary} with a brief discussion.

\section{Lower bounds on adaptive performance} \label{sec::lower bounds}

The main result of this section shows that adaptive sensing cannot offer
substantial improvements over the nonadaptive scheme when the measurements are
restricted to certain specific classes of ensembles and the signal is sparse in the canonical basis (i.e., $\mathbf{\Psi} = \mathbf{I}$).  We first consider the
Fourier ensemble, where the sensing vectors are chosen from the
rows of the DFT matrix
$\F\in\C^{n\times n}$, where $\F$ has entries
\begin{equation}\label{dft}
f_{jk} = \frac{1}{\sqrt{n}}\exp(-2\pi\sqrt{-1} jk/n)
\end{equation}
for $j,k = 0, 1, \ldots, n-1$.  In this constrained setting we have the following lower bound.

\begin{theorem}\label{thm1}
\editnew{Under the adaptive measurement model of~\eqref{eq:measmodel}, where the $\a_i$ are chosen (potentially adaptively and allowing repeated measurements) by selecting rows from the DFT matrix~\eqref{dft}, we have that
\begin{equation}\label{dftbound}
\inf_{\widehat{\x}} \sup_{\substack{\|\x \|_0 \leq s \\ \| \x \|_2 \ge R}} \mathbb{E}\|\widehat{\x} - \x\|_2^2 \geq  \frac{n}{m}s\sigma^2
\end{equation}
for any $R \ge 0$.}
\end{theorem}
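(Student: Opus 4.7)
The strategy is to reduce the minimax problem to estimation on a \emph{known} support and then exploit a rigidity of the DFT rows via a Fisher-information lower bound. For any fixed index set $S \subset \{0,\ldots,n-1\}$ with $|S|=s$, the family of vectors supported on $S$ with $\|\x\|_2 \geq R$ is contained in the feasible set of \eqref{dftbound}, so the minimax risk is at least the minimax risk over this smaller family. Moreover, $\|\widehat{\x}-\x\|_2^2 \geq \|\widehat{\x}_S - \x_S\|_2^2$ whenever $\supp(\x) \subseteq S$, so it suffices to lower bound the minimax risk for estimating $\x_S \in \C^s$ from the adaptively collected data.

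The key structural observation is that every entry of the DFT matrix has modulus $1/\sqrt{n}$, so any row $\mathbf{f}_j$ restricted to $S$ has squared norm exactly $s/n$. In particular, no matter how the algorithm adaptively selects the $m$ rows $\a_1,\ldots,\a_m$, the $m\times s$ submatrix $\A_S$ satisfies the deterministic identity $\|\A_S\|_F^2 = ms/n$. Because the adaptive policy chooses $\a_i$ as a function only of past observations and not of $\x$, the log-likelihood of $\y$ given $\x_S$ reduces to $-\frac{1}{2\sigma^2}\sum_i |y_i - \langle \a_{i,S}, \x_S\rangle|^2$ up to an $\x_S$-independent term, and the Fisher information matrix for $\x_S$ is $I(\x_S) = \sigma^{-2}\mathbb{E}_{\x_S}[\A_S^*\A_S]$. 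Its trace is $\sigma^{-2}\mathbb{E}_{\x_S}\|\A_S\|_F^2 = ms/(n\sigma^2)$, independent of $\x_S$. The AM--HM inequality applied to the positive eigenvalues of $I(\x_S)$ then gives
\[
\trace\bigl(I(\x_S)^{-1}\bigr) \;\geq\; \frac{s^2}{\trace(I(\x_S))} \;=\; \frac{ns\sigma^2}{m},
\]
which is precisely the claimed bound.

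The Cram\'er--Rao inequality supplies $\mathbb{E}_{\x_S}\|\widehat{\x}_S-\x_S\|_2^2 \geq \trace(I(\x_S)^{-1})$ only for unbiased estimators, so the main technical obstacle is extending this to arbitrary estimators while simultaneously accommodating the constraint $\|\x\|_2 \geq R$. I would resolve this via a van Trees / Bayesian Cram\'er--Rao argument: choose a smooth prior $\pi$ on $\x_S$ supported in $\{\x_S : \|\x_S\|_2 \geq R\}$ (for example, a broad Gaussian shifted to a point of norm exceeding $R$) whose Fisher information $J(\pi)$ is negligible compared to $\trace(I(\x_S))$. The van Trees inequality then bounds the Bayes risk, and hence the minimax risk, from below by $s^2/(\trace(I(\x_S))+J(\pi))$, which can be driven arbitrarily close to $ns\sigma^2/m$ by diffusing $\pi$. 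This delivers \eqref{dftbound} uniformly in $R \geq 0$, with the whole argument ultimately turning on the single nonadaptive identity $\|\A_S\|_F^2 = ms/n$ that the DFT forces on every adaptive policy.
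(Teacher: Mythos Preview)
Your proposal is correct and follows essentially the same route as the paper: both reduce to a fixed support, exploit the deterministic identity $\|\A_S\|_F^2 = ms/n$ forced by the constant-modulus DFT entries, apply the AM--HM inequality to the eigenvalues, and handle arbitrary estimators together with the constraint $\|\x\|_2 \ge R$ via a diffuse Bayesian prior. The only cosmetic difference is that the paper computes the Bayes risk for a (truncated) Gaussian prior $\mathcal{N}(0,\rho^2\bI)$ explicitly and lets $\rho \to \infty$, whereas you package the same limiting argument as a van Trees inequality; in this linear-Gaussian setting the two devices are interchangeable.
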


This shows that even using an optimal choice of sensing vectors, the recovery error is still proportional to $\frac{n}{m}s\sigma^2$, \editnew{{\em even if we exclude the low-SNR setting} (by setting $R$ to be large relative to $\sigma$). This is somewhat reminiscent
of the main results of~\cite{AriasCD_Fundamental} and~\cite{Castr_Adaptive},
which (in an unconstrained setting) establish minimax bounds of the form given in~\eqref{eq:adaptLB}. However, a key difference is that in the unconstrained setting the worst-case error which defines the minimax rate is determined by the performance at a certain range of worst-case SNRs.  Specifically, these bounds are obtained by constructing a ``least favorable prior'' where the nonzeros of $\x$ are near a specific level,\footnote{This threshold is around
$(\min_i x_i^2)/\sigma^2 \approx (n/m) \log s$.} and thus if we were to exclude these challenging $\x$ via the restriction that $\|x\|_2 \ge R$ as in~\eqref{dftbound}, the bound in~\eqref{eq:adaptLB} would be dramatically lower -- in particular, the gains shown in~\eqref{eq:adaptive} could be realized~\cite{DavenA_Compressive,MalloN_Nearb}. Thus, in a sense Theorem~\ref{thm1} is far more pessimistic than these results since it applies {\em no matter how large the SNR} -- although given the incoherence of the DFT and the canonical bases, perhaps this is not that surprising. Finally, we note that for certain values of $R$ it may be possible to obtain a slightly stronger version of Theorem~\ref{thm1} (by a $\log n / \log\log n$ factor) using the techniques in~\cite[Thm 6.1]{hassanieh2012nearly}. We do not pursue these refinements here.}

\begin{proof}[Proof of Theorem~\ref{thm1}]
\editnew{For any adaptive procedure $\widehat{\x}$, we let $\F'$ be the $m\times n$ sensing matrix consisting of the $m$ adaptively chosen vectors from the rows of $\F$, and let $\F'_\Lambda$ denote the $m\times s$ submatrix of $\F'$ whose column indices correspond to the indices of the support $\Lambda$ of $\x$. Using the rows of $\F'$ to acquire the measurements as in~\eqref{eq:measmodel}, we obtain $\y =\F'\x+\z =  \F'_\Lambda\x_\Lambda + \z$. It is not difficult to show (e.g., see the Appendix of~\cite{CandeD_How}) that
\[
\inf_{\widehat{\x}} \sup_{\substack{\|\x \|_0 \leq s \\ \| \x \|_2 \ge R}} \mathbb{E}\|\widehat{\x} - \x\|_2^2 \ge \inf_{\widehat{\x}} \sup_{\substack{\x' \in \real^s \\ \| \x' \|_2 \ge R}} \mathbb{E}\|\widehat{\x}(\F'_\Lambda\x' + \z) - \x'\|_2^2,
\]
where $\widehat{\x}(\cdot)$ takes values in $\real^s$.}

\editnew{To establish the bound in~\eqref{dftbound} we consider a sequence of least favorable prior distributions on $\x'$. The minimax risk is always larger than the Bayes risk under any prior, so this will establish a lower bound on the minimax risk. Towards this end, consider the prior on $\x'$ where $\x' \sim \mathcal{N}(0, \rho^2 \bI)$, but where the distribution is truncated to be zero for $\| \x' \|_2 \le R$ and re-scaled appropriately. Note that in the absence of this truncation, the Bayes risk would be given by
\begin{equation} \label{eq:Bayes}
\sigma^2 \sum_{i=1}^s \left( \frac{\sigma_i(\F'_\Lambda)}{\sigma_i^2(\F'_\Lambda) + \frac{\sigma^2}{\rho^2}} \right)^2,
\end{equation}
where $\sigma_i(\F'_\Lambda)$ denotes the $i^{\text{th}}$ singular value of $\F'_\Lambda$. This follows from the fact that the Bayes estimator is given by \[
\mathbb{E}[\x|\y] = ({\F'_\Lambda}^T \F'_\Lambda + \textstyle{\frac{\sigma^2}{\rho^2}} \bI)^{-1}{\F'_\Lambda}^T \y.
\]
The result in~\eqref{eq:Bayes} follows from the fact that for this estimator the expected squared error is given by 
\[
\sigma^2 \|({\F'_\Lambda}^T \F'_\Lambda + \textstyle{\frac{\sigma^2}{\rho^2}} \bI)^{-1}{\F'_\Lambda}^T\|_F^2
\]
which reduces to~\eqref{eq:Bayes} via the application of standard properties of the singular value decomposition.  We now note that 
for any $R \ge 0$, as $\rho^2 \rightarrow \infty$, the Bayes risk for the truncated prior will converge to that of~\eqref{eq:Bayes}, namely,
\[
\sigma^2 \|({\F'_\Lambda}^T \F'_\Lambda)^{-1}{\F'_\Lambda}^T\|_F^2 = \sigma^2 \sum_{i=1}^s \frac{1}{\sigma_i^2(\F'_\Lambda)}
\]
}

\editnew{Putting this all together, we have that}
\begin{align*}
\inf_{\widehat{\x}} \sup_{\substack{\|\x \|_0 \leq s \\ \| \x \|_2 \ge R}} \mathbb{E}\|\widehat{\x} - \x\|_2^2 & \ge \sigma^2 \sum_{i=1}^s \frac{1}{\sigma_i^2(\F'_\Lambda)} \\
& \geq \sigma^2\frac{s^2}{\sum_{i=1}^s \sigma_i^2(\F'_\Lambda)} \\
& = \sigma^2\frac{s^2}{\|\F'_\Lambda\|_F^2}
\end{align*}
where the second inequality follows from Jensen's inequality.  Since $\|\F'_\Lambda\|_F^2 = \frac{sm}{n}$, this completes the proof.
\end{proof}

Our next result generalizes this type of lower bound to any ensemble whose
submatrices satisfy the RIP with \editnew{overwhelming} probability.  This statement is
significant because it suggests that in some constrained
situations, specifically many commonly studied in compressive sensing,
there is little benefit from adaptivity.
\editnew{Formally, we define an RIP ensemble as follows.
\begin{definition}
Let $m$ be fixed. We say that an $n \times n$ matrix $\A$ with unit-norm rows is an {\em RIP ensemble} if for any $m'\geq m$ a
random $m'\times n$ submatrix $\widetilde{\A}$, whose rows are uniformly chosen without replacement,
satisfies
\begin{equation}\label{rip2}
0.5\frac{m'}{n}\|\u\|_2^2 \leq \|\widetilde{\A}\u\|_2^2 \leq 1.5\frac{m'}{n}\|\u\|_2^2,
\end{equation}
for all $s$-sparse $\u$ with probability $1-\exp(-cn)$ (where $c$ is such that $\exp(-cn) < 1/2n$).
\end{definition}}

Theorem~\ref{thm2} makes rigorous the claim that selecting rows intelligently
from such a matrix yields no substantial improvement over a nonadaptive
scheme.
\begin{theorem}\label{thm2}
\editnew{Under the adaptive measurement model of~\eqref{eq:measmodel}, where the $\a_i$ are chosen (potentially adaptively and allowing repeated measurements) by selecting rows from an RIP ensemble as defined above, we have that
\begin{equation}\label{RIPbound}
\inf_{\widehat{\x}} \sup_{\substack{\|\x \|_0 \leq s \\ \| \x \|_2 \ge R}} \mathbb{E}\|\widehat{\x} - \x\|_2^2 \geq  \frac{sn}{3m^2}s\sigma^2
\end{equation}
for any $R \ge 0$.}
\end{theorem}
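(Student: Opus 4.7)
The plan is to mirror the proof of Theorem~\ref{thm1} line-for-line through the Jensen step, and then to replace the deterministic identity $\|\F'_\Lambda\|_F^2 = sm/n$ (which in the DFT case was a free consequence of every entry having modulus $1/\sqrt{n}$) by an RIP-based upper bound on $\|\A'_\Lambda\|_F^2$ that is uniform over all adaptive row choices. Everything from the support-reduction step, through the truncated Gaussian prior with $\rho^2\to\infty$, to the application of Jensen's inequality $\sum_{i=1}^s \sigma_i^{-2}(\A'_\Lambda) \ge s^2/\sum_{i=1}^s \sigma_i^2(\A'_\Lambda)$, carries over unchanged since it uses nothing specific about the ensemble; it yields
\[
\inf_{\widehat{\x}} \sup_{\substack{\|\x\|_0\le s \\ \|\x\|_2\ge R}} \mathbb{E}\|\widehat{\x}-\x\|_2^2 \;\ge\; \frac{\sigma^2 s^2}{\|\A'_\Lambda\|_F^2},
\]
where $\A'$ is the (data-dependent) $m\times n$ matrix of rows chosen by the adaptive procedure.

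The one genuinely new step is to bound $\|\A'_\Lambda\|_F^2$. Set $p_j := \|\a_j|_\Lambda\|_2^2$ for each of the $n$ candidate rows $\a_j$ of $\A$. Because the procedure is constrained to pick (possibly with repetition) $m$ of these rows, we get, independently of the adaptive rule and of any realization of the noise or signal,
\[
\|\A'_\Lambda\|_F^2 \;=\; \sum_{k=1}^{m} \|\a_{i_k}|_\Lambda\|_2^2 \;\le\; m\,\max_{j} p_j,
\]
so it suffices to show $\max_j p_j \le 1.5\,m/n$.

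For this, fix any $j$ with $p_j>0$ and consider the $s$-sparse unit vector $\u := \a_j|_\Lambda/\sqrt{p_j}$. Let $\widetilde{\A}$ be a random $m\times n$ submatrix formed by selecting a row-index set $S\subset\{1,\ldots,n\}$ uniformly without replacement. On the event $\{j\in S\}$, retaining only the $i=j$ summand gives $\|\widetilde{\A}\u\|_2^2 \ge |\langle\a_j,\u\rangle|^2 = p_j$. The RIP hypothesis with $m'=m$ gives $\|\widetilde{\A}\u\|_2^2 \le 1.5\,m/n$ with probability at least $1-\exp(-cn) > 1 - 1/(2n)$. Since $\Pr[j\in S] = m/n$, a union bound yields
\[
\Pr\big[j\in S \text{ and RIP holds}\big] \;\ge\; \frac{m}{n} - \frac{1}{2n} \;>\; 0,
\]
so both events occur on at least one realization, forcing $p_j \le 1.5\,m/n$. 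Taking the maximum over $j$ and plugging back in gives $\|\A'_\Lambda\|_F^2 \le 1.5\,m^2/n$, hence
\[
\inf_{\widehat{\x}}\sup \mathbb{E}\|\widehat{\x}-\x\|_2^2 \;\ge\; \frac{\sigma^2 s^2}{1.5\,m^2/n} \;=\; \frac{2\,s^2 n\,\sigma^2}{3\,m^2} \;\ge\; \frac{sn}{3m^2}\,s\sigma^2,
\]
as claimed.

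The main obstacle is the probabilistic existence argument for $S$: it is precisely why the definition of an RIP ensemble bakes in the apparently technical condition $\exp(-cn) < 1/(2n)$. That condition is exactly strong enough to upgrade an average (random-subset) RIP statement into a uniform deterministic bound on the coordinate energy of \emph{every} row, which in turn lets us control the worst adversarial choice of rows by a fully adaptive procedure.
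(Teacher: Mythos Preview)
Your proof is correct and follows the same overall skeleton as the paper: reduce to a fixed support $\Lambda$, invoke the truncated-Gaussian/Jensen argument of Theorem~\ref{thm1} to obtain $\inf\sup \mathbb{E}\|\widehat{\x}-\x\|_2^2 \ge \sigma^2 s^2/\|\A'_\Lambda\|_F^2$, and then control $\|\A'_\Lambda\|_F^2$ by bounding the largest row energy $\max_j \|\a_j|_\Lambda\|_2^2$ uniformly over all rows of $\A$.

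The one genuine difference is in how that row-energy bound is obtained. The paper conditions a random $(m{+}1)$-row submatrix to contain the worst row $\a^\star$, splits it into $\a^\star$ and the remaining $m$ rows, and then plays the \emph{lower} RIP bound on the $m$-row piece against the \emph{upper} RIP bound on the $(m{+}1)$-row piece in a proof by contradiction, arriving at $\|\a^\star_\Lambda\|_2^2 \le 3m/n$. Your argument is more direct: you draw a single random $m$-row submatrix, intersect the event $\{j\in S\}$ (probability $m/n$) with the RIP event (failure probability $<1/(2n)$), and read off $p_j \le \|\widetilde{\A}\u\|_2^2 \le 1.5\,m/n$ immediately from the upper RIP inequality alone. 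This is cleaner---it avoids the second submatrix and never uses the lower RIP bound---and it yields the sharper constant $1.5$ in place of $3$, hence a final lower bound $\tfrac{2sn}{3m^2}s\sigma^2$ rather than $\tfrac{sn}{3m^2}s\sigma^2$. Both arguments rely on the hypothesis $\exp(-cn)<1/(2n)$ for exactly the reason you identify: to convert the high-probability RIP statement into a deterministic per-row energy bound.
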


We note that one usually anticipates $m$ to be on the order of $s\log n$, in which case this bound becomes
$$
\mathbb{E}\|\widehat{\x} - \x\|_2^2 \geq \frac{sn}{3m s\log n}s\sigma^2
    = \frac{n}{3m\log n}s\sigma^2, 
$$
which is roughly a factor of $\log^2 n$ lower than the upper bound in (\ref{eq:nonadaptive}).
This result shows that the recovery error with any adaptive measurements selected from some standard RIP ensemble again falls short of the possible gains shown in~\eqref{eq:adaptPotential}.

We also note here that the bound in Theorem \ref{thm2} is worse by a factor of
$m/s$ than Theorem \ref{thm1}.  However, we believe this is necessary due to
the fact that the only assumption we place on $\A$ is that \eqref{rip2} holds
with \editnew{overwhelming} probability; this is a much weaker requirement than insisting on DFT measurements as in Theorem \ref{thm1}.  As a motivating example, fix some
subset $\Lambda\subset\{1,\ldots, n\}$ of size $s$.  Construct a matrix $\A$
by setting it to the DFT basis $\F$, with its first row modified in the
following way: on $\Lambda$, multiply each entry by a factor of $C$ where $C^2
= {m/8s}$ and off of $\Lambda$ multiply each entry by a factor $c =
\sqrt{(n-sC^2)/(n-s)}$.  This yields a matrix $\A$ whose rows still have unit
norm.  In addition, one can show that for this new matrix $\A$, the property
\eqref{rip2} still holds with the same probability for $\delta = 5/8$ for any
$(m+1)\times n$ submatrix $\widetilde{\A}$.  Construct an $(m+1)\times n$
matrix $\A'$ with the first row of $\A$ repeated $m-s+2$ times (since at least
$s$ rows need to be unique).  Then one computes that $\|\A'_{\Lambda}\|_F^2 =
\frac{s}{n}(s-1 + \frac{m}{8s}(m-s+2))\gtrsim m^2/n$.  On the other hand, any
matrix of the same size adaptively constructed from the DFT basis $\F$ has
a squared Frobenius norm equal to $s(m+1)/n$.  Thus we may indeed lose an $m/s$ factor because of this weakened assumption.

\begin{proof}[Proof of Theorem~\ref{thm2}]
Let $\A'$ be the $m\times n$ matrix of the adaptively selected rows as in the
theorem. Fix a support set $\Lambda$ of size at most $s$.  \editnew{Let $\A'_{\Lambda}$ be
the restriction of $\A'$ to the support set $\Lambda$.  We will prove the result by showing
a bound on the norm of the rows of $\A'_{\Lambda}$ which we obtain via an argument of contradiction.
To that end,} let $\a^{\star}$ be
the row of $\A$ corresponding to the row of $\A'_{\Lambda}$ with the  greatest
Euclidean norm.  Now consider drawing a random $(m+1) \times n$ submatrix
$\widetilde{\A}$ of $\A$ that contains $\a^{\star}$ as a row.
 Then one can compute that any such
submatrix $\widetilde{\A}$ satisfies $\eqref{rip2}$ (with $m'=m+1$) with
probability at least $1 - \exp(-cn)n/(m+1) > 1- \exp(-cn)n$.  \editnew{Indeed, one sees formally
that
\begin{align*}
    \Pr&(\widetilde{\A} \text{ does not satisfy \eqref{rip2}}
        \mid \a^{\star} \text{ is a row of } \widetilde{\A})\\
    &=\frac{\Pr(\widetilde{\A} \text{ does not satisfy \eqref{rip2} and }
        \a^{\star} \text{ is a row of } \widetilde{\A})}
        {\Pr(\a^{\star} \text{ is a row of } \widetilde{\A})}\\
    &\leq \frac{\Pr(\widetilde{\A} \text{ does not satisfy \eqref{rip2}})}
        {\Pr(\a^{\star} \text{ is a row of } \widetilde{\A})}\\
    &\leq \frac{\exp(-cn)}{(m+1)/n}.
\end{align*}}
Now let
$\widetilde{\A}^c$ be the remainder of the matrix, i.e., all rows of
$\widetilde{\A}$ except row $\a^{\star}$.  Similarly, one
computes that any such matrix $\widetilde{\A}^c$ satisfies \eqref{rip2} (with
$m'=m$) with probability at least $ 1- \exp(-cn)n/(n-m) > 1- \exp(-cn)n$.
Thus \textit{both} of these matrices satisfy \eqref{rip2} with probability at
least $1 - 2\exp(-cn)n > 0$.  For the sake of a contradiction, suppose that
$\|\a^{\star}_{\Lambda}\|_2^2 > 3m/n$.  Observe that the signal $\x\in\real^n$ where
$\x_{\Lambda} = \a^{\star}_{\Lambda}$ and padded with zeros off of the support $\Lambda$ is an $s$-sparse signal.   Then since both matrices satisfy \eqref{rip2}, we must have that
\begin{align*}
\|\widetilde{\A}\x\|_2^2 &= \|\widetilde{\A}^c\x\|_2^2 + |\langle \a^{\star}_{\Lambda}, \x\rangle|_2^2\\
&\geq 0.5\frac{m}{n}\|\x\|_2^2 + \|\x\|_2^4 \\
&> \biggl(0.5\frac{m}{n} + \frac{3m}{n}\biggr) \|\x\|_2^2
\geq \frac{3.5m}{n}\|\x\|_2^2.
\end{align*}
On the other hand, we must also have that
\begin{align*}
\|\widetilde{\A}\x\|_2^2 &\leq 1.5\frac{m+1}{n}\|\x\|_2^2.
\end{align*}
Combining these means that $\frac{3.5m}{n} \leq 1.5\frac{m+1}{n} \leq  \frac{3m}{n}$, which is a contradiction.  Thus, it must be that $\|\a^{\star}_{\Lambda}\|_2^2 \leq 3m/n$.
Since $\a^{\star}_{\Lambda}$ is the largest row of $\A'_{\Lambda}$, we then have that
$$
\|\A'_\Lambda\|_F^2 \leq m\|\a^{\star}_{\Lambda}\|_2^2
\leq \frac{3m^2}{n}.
$$
\editnew{Following the same argument as in the proof of Theorem~\ref{thm1}, we} thus have that
$$
\editnew{\inf_{\widehat{\x}} \sup_{\substack{\|\x \|_0 \leq s \\ \| \x \|_2 \ge R}} \mathbb{E}\|\widehat{\x} - \x\|_2^2} \geq \frac{s^2}{\|\A'_{\Lambda}\|_F^2}\sigma^2 \geq
\frac{sn}{3m^2}s\sigma^2,
$$
which completes the proof.

\end{proof}

\section{Adaptivity through optimal experimental design}\label{sec::opt exp des}

Although there are some settings where constrained adaptive sensing does not offer substantial improvement over the nonadaptive scheme, one can of course ask if there are other settings where notable gains are still possible. In order to address this question, we consider the simplified constrained adaptive sensing problem where we assume the support $\Lambda$ of the signal $\x$ (with respect to the sparsity basis $\mathbf{\Psi}$) is known, or some estimate of the support is provided. How would we choose the measurements to best make use of this information, while still respecting that the measurements are constrained to be from the measurement ensemble $\mathcal{M}$? 

Let $\{\a_i\}_{i=1}^m$ denote a sequence of length $m$ with elements
$\a_i \in \mathcal{M}$ corresponding to the measurements of
$\mathcal{M}$ that are chosen.\footnote{We use a {\it sequence} of elements from $\{1,\dots,|\mathcal{M}|\}$ rather than a {\it subset} to emphasize that the $m$ measurements from $\mathcal{M}$ need not be distinct.  \editnew{Note that in the general adaptive setting the \textit{order} of the measurements is also important; however, in the context of this section there is only one batch of adaptive measurements and thus the order within this batch has no impact.}} Then, denote by ${\A}'$ the $m \times n$
matrix (recall $\mathcal{M} \subset \C^n$) whose $i^{\text{th}}$ row is $\a_i$.  If $\Lambda = \supp(x)$, then it can be shown by following the arguments in the proof of Theorem \ref{thm1} that the optimal MSE satisfies
\begin{equation}\label{eqn::MSEandNorm}
\begin{split}
\mathbb{E} \|\widehat{\x} - \x \|_2^2 &= \| (\A'\mathbf{\Psi}_\Lambda)^\dagger\|_F^2 \sigma^2\\
    &= \trace\left( ((\A'\mathbf{\Psi}_\Lambda)^* \A'\mathbf{\Psi}_\Lambda)^{-1}
    \right)\sigma^2,
\end{split}
\end{equation}
where \editnew{$(\A'\mathbf{\Psi}_\Lambda)^\dagger$ denotes the Moore-Penrose pseudoinverse of $\A'\mathbf{\Psi}_\Lambda$}, $\sigma^2$ is the variance of the noise term as in
(\ref{eq:measmodel}), $\mathbf{\Psi}_\Lambda$ is the submatrix of $\mathbf{\Psi}$ restricted to the columns
indexed by $\Lambda$, and $\A'\mathbf{\Psi}_\Lambda$ is assumed to have full (column) rank.  Our goal is to find a length-$m$ measurement sequence $\{\a_i\}_{i=1}^m$ that minimizes (\ref{eqn::MSEandNorm}), which is equivalent to solving 
\begin{equation}\label{eqn::opt}
\{\widehat{\a}_i\}_{i=1}^m = \argmin_{\left\{\{\a_i \}_{i=1}^m \mid\,
\a_i \in \mathcal{M} \right\}\hspace{-12pt}} \;\; \trace\left( ((\A'\mathbf{\Psi}_\Lambda)^* \A'\mathbf{\Psi}_\Lambda)^{-1} \right),
\end{equation}
where $\A' = \A'(\{\a_i \}_{i=1}^m)$ is constructed 
as described above.  \editnew{Note that an essentially equivalent way to state~\eqref{eqn::opt} (up to a permutation of the measurements) is via the discrete optimization problem}
\begin{equation}\label{eqn::relaxation_review}
\begin{split}
\widehat{\S} = &\argmin_{\substack{\text{\footnotesize{diagonal matrices} }\S\succeq 0 \\ s_{ii}\in\mathbb{Z}^+}} \;\; \trace\left( ((\A\mathbf{\Psi}_\Lambda)^* \S \A\mathbf{\Psi}_\Lambda)^{-1} \right)  \\
& \quad \quad \quad \text{subject to~}  \trace(\S)\leq m, \end{split}
\end{equation}
where $\A$ is the $|\mathcal{M}| \times n$ matrix containing all possible
measurement vectors from $\mathcal{M}$ and $s_{ii}\in\mathbb{Z}^+$ forces each diagonal entry of $\S$ to be a non-negative integer (reflecting the multiplicity of each $\a_i$).
\editnew{Both \eqref{eqn::opt} and \eqref{eqn::relaxation_review} reflect the optimization problem that we would ideally like to solve.  Unfortunately they are computationally demanding discrete optimization problems; hence, we instead consider the relaxation of~\eqref{eqn::relaxation_review}}
\begin{equation}\label{eqn::relaxation}
\begin{split}
\widehat{\S} = &\argmin_{\mbox{\footnotesize{diagonal matrices} }\S\succeq 0} \;\; \trace\left( ((\A\mathbf{\Psi}_\Lambda)^* \S \A\mathbf{\Psi}_\Lambda)^{-1} \right)  \\
& \quad \quad \quad \text{subject to~}  \trace(\S)\leq \editnew{m}, \end{split}
\end{equation}
where the constraint
$\trace(\S)\leq \editnew{m}$ ensures that the resulting ``weighted''
sensing matrix $\sqrt{\S}\A$ 
satisfies the ``sensing energy'' constraint $\|{\sqrt{\S}\A}\|_F^2 \leq \editnew{m}$ 
when the rows of $\A$ 
are normalized. 
Note that this is equivalent to the continuous design for the A-optimality criterion studied
in the optimal experimental design literature~\cite{Pukel_OptDesign}.

Fortunately, \eqref{eqn::relaxation} is a convex problem~\cite{BoydV_Convex} and
can be efficiently solved by a number of methods.
Whereas the problem in (\ref{eqn::opt}) would tell us which measurements and how many of each to use from $\mathcal{M}$, (\ref{eqn::relaxation}) instead tells us, through the diagonal matrix $\widehat{\S}$ of weights, ``how much" of each measurement to use. \editnew{We simply weight each $\a_i$ by $\sqrt{\widehat{s}_{ii}}$, where $\widehat{s}_{ii}$ denotes the $i^{\text{th}}$ element on the diagonal of $\widehat{\S}$.}

If, on the other hand, we use the measurement model where we must choose $m$ unweighted
measurements from $\mathcal{M}$, the practical use of $\widehat{\S}$ from (\ref{eqn::relaxation}) is less obvious. We experimented with several different (though likely sub-optimal) approaches to using the weights in $\widehat{\S}$, and the following method empirically seemed to produce the best results.
In this work, we use a simple sampling scheme to obtain a
discrete design.
Specifically, we draw exactly $m$ measurements,
with replacement, according to the probability mass function
\begin{equation} \label{eq:Spmf}
p_i = \frac{\widehat{s}_{ii}}{m}.
\end{equation}
We guarantee that the resulting matrix $\A'$ is at least rank $s$ by rejecting any construction for which this constraint is not satisfied.
These $m$ measurements then form the rows of the sensing matrix $\A'$.

\section{Case study: Fourier measurements of Wavelet sparse signals}\label{sec::adapt tomo}

The results of Section \ref{sec::lower bounds} demonstrate that adaptive sensing cannot offer substantial improvements over nonadaptive sensing for certain classes of measurement ensembles when the signal is sparse in the canonical basis. 
We next explore the case when $\mathbf{\Psi}$ is instead a wavelet basis and we acquire DFT measurements (this is indeed the setting of Figure \ref{fig::UnifVSOracle}, which suggests dramatic potential improvements from constrained adaptive sensing). This setting serves as a somewhat idealized model for a number of applications in tomography and other medical imaging since physical limitations would entail that we can only acquire DFT measurements, and \editnew{realistic} images are generally sparse with respect to wavelet bases~\cite{Daube_Ten}. In this setting we might receive one DFT measurement at a time, and from those, we can (potentially in real time) request the next DFT coefficient to be measured.

For our first two sets of experiments, we will assume the sparsity basis ${\bf \Psi}$ is the Haar wavelet basis. We will denote the $n \times n$ discrete Haar wavelet transform by $\H$, with entries $h_{jk}$ 
for $j,k = 0,1,\ldots,n-1$ and $n$ is assumed to be some power of 2.  When $j = 0$, we have
\begin{equation} \label{eqn::HaarDef1}
h_{0k} = \frac{1}{\sqrt{n}}.
\end{equation}
For indices $j>0$, we write $j = 2^p + q - 1$, where $p =\lfloor \log_2 j \rfloor$ and $q$ are nonnegative integers, and define
\begin{equation} \label{eqn::HaarDef2}
h_{jk}= \frac{1}{\sqrt{n}}
\begin{cases}
2^{p/2} & \frac{(q-1)n}{2^p} \leq k < \frac{(q-0.5)n}{2^p} \\
-2^{p/2}  & \frac{(q-0.5)n}{2^p} \leq k < \frac{qn}{2^p} \\
0  &  \text{otherwise}.
\end{cases}
\end{equation}
Since, however, the Haar wavelet basis $\H$ is a sparsifying transformation, for a signal (or image) $\x$ we have that $\H \x = \boldsymbol{\alpha}$, with $\|\boldsymbol{\alpha}\|_0 \leq s$. This means $\x = \H^*\boldsymbol{\alpha}$, where $\H^*$ denotes the adjoint of $\H$, for which $\H^*=\H^{-1}$ since $\H$ is unitary. 

With this notation in hand and recalling that $\F$ is the $n\times n$ DFT, (\ref{eqn::MSEandNorm}) becomes
\begin{equation}\label{eqn::MSEandNormFH}
\mathbb{E} \|\widehat{\x} - \x \|_2^2 = \| (\F'\H_\Lambda^*)^\dagger\|_F^2 \sigma^2,
\end{equation}
where $\F'$ is the $m\times n$ sensing matrix consisting of the $m$ adaptively chosen vectors from $\F$ and $\H^*_\Lambda$ is the $n\times s$ submatrix of $\H^*$ restricted to the columns indexed by $\Lambda = \supp(\boldsymbol{\alpha}) = \supp(\H\x)$. Thus, we see that the optimal MSE depends on the correlations of the DFT and Haar basis elements. In a similar manner, in our last experiment, where the signal is an MRI image, we will assume the sparsity basis $\mathbf{\Psi}$ is the Daubechies wavelet with 3 vanishing moments (D6).

We now present a suite of numerical simulations in these settings that employ the relaxation (\ref{eqn::relaxation}) followed by the sampling scheme described in Section~\ref{sec::opt exp des} to select a sequence of $m$ DFT measurement vectors.
We then follow with a short analysis for the simple case of 1-sparse signals.

\begin{figure*}[t]
\centering
\begin{tabular}{cc}
\includegraphics[height=2.3in,width=3in]{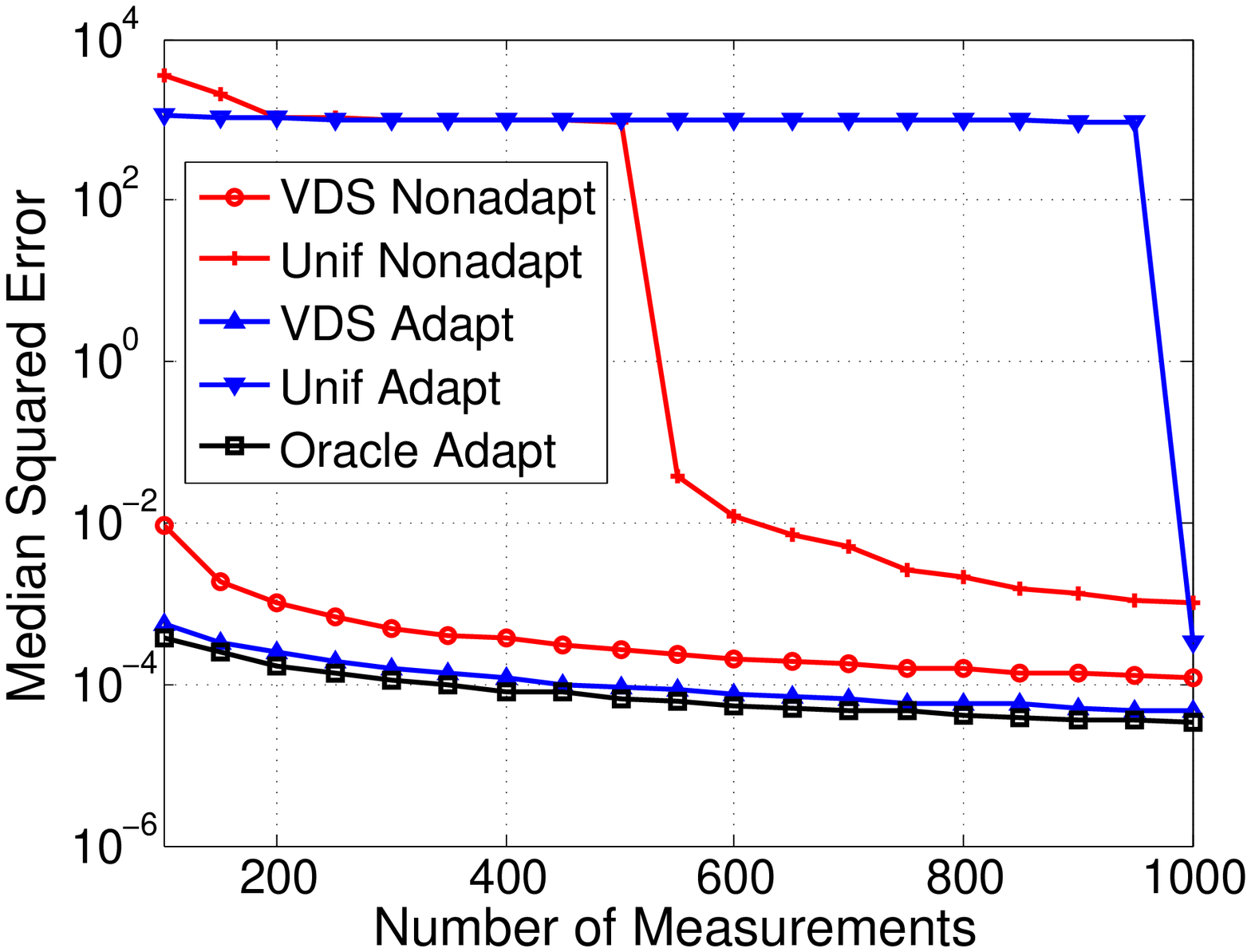} &
\includegraphics[height=2.3in,width=3in]{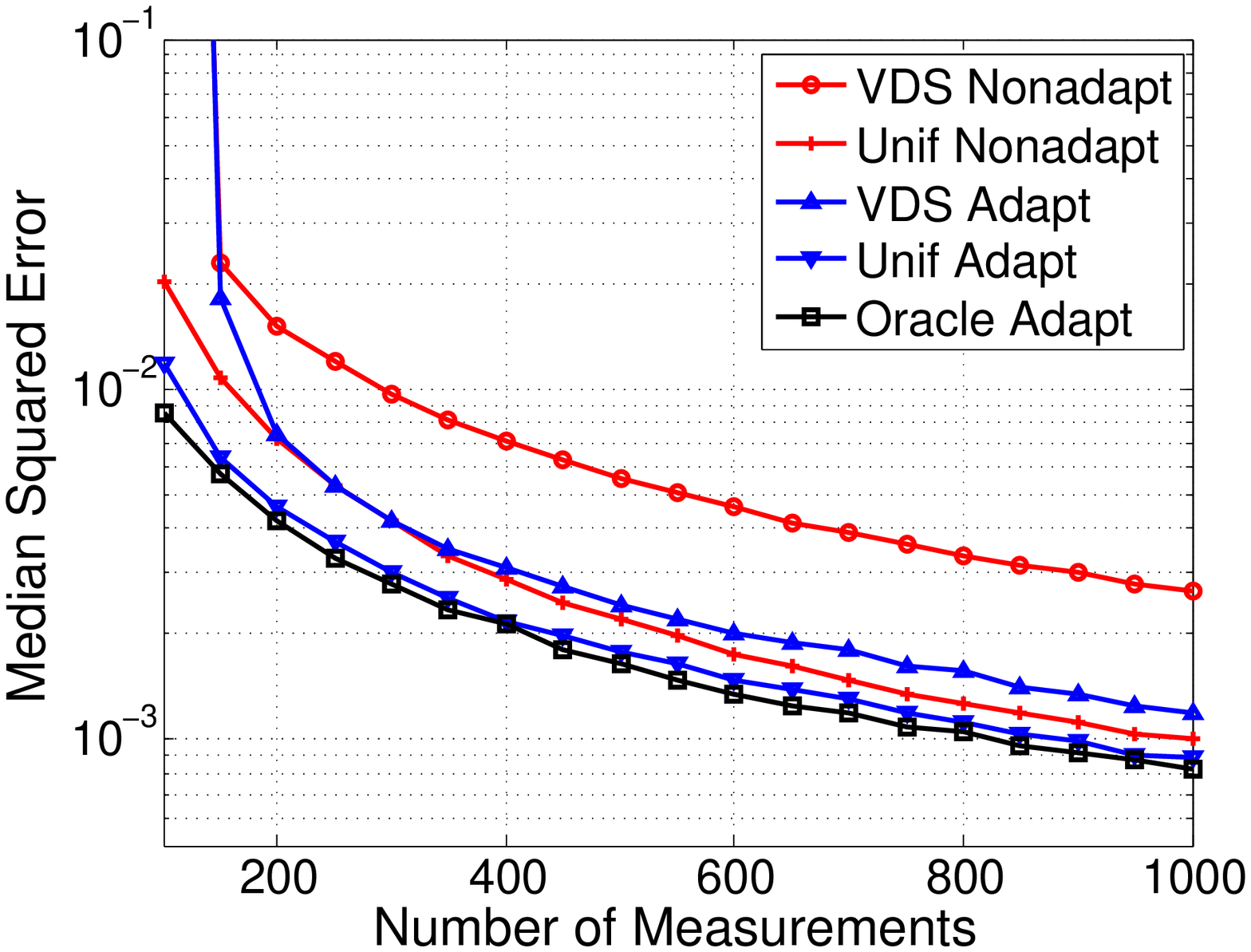} \\
\includegraphics[height=2.3in,width=3in]{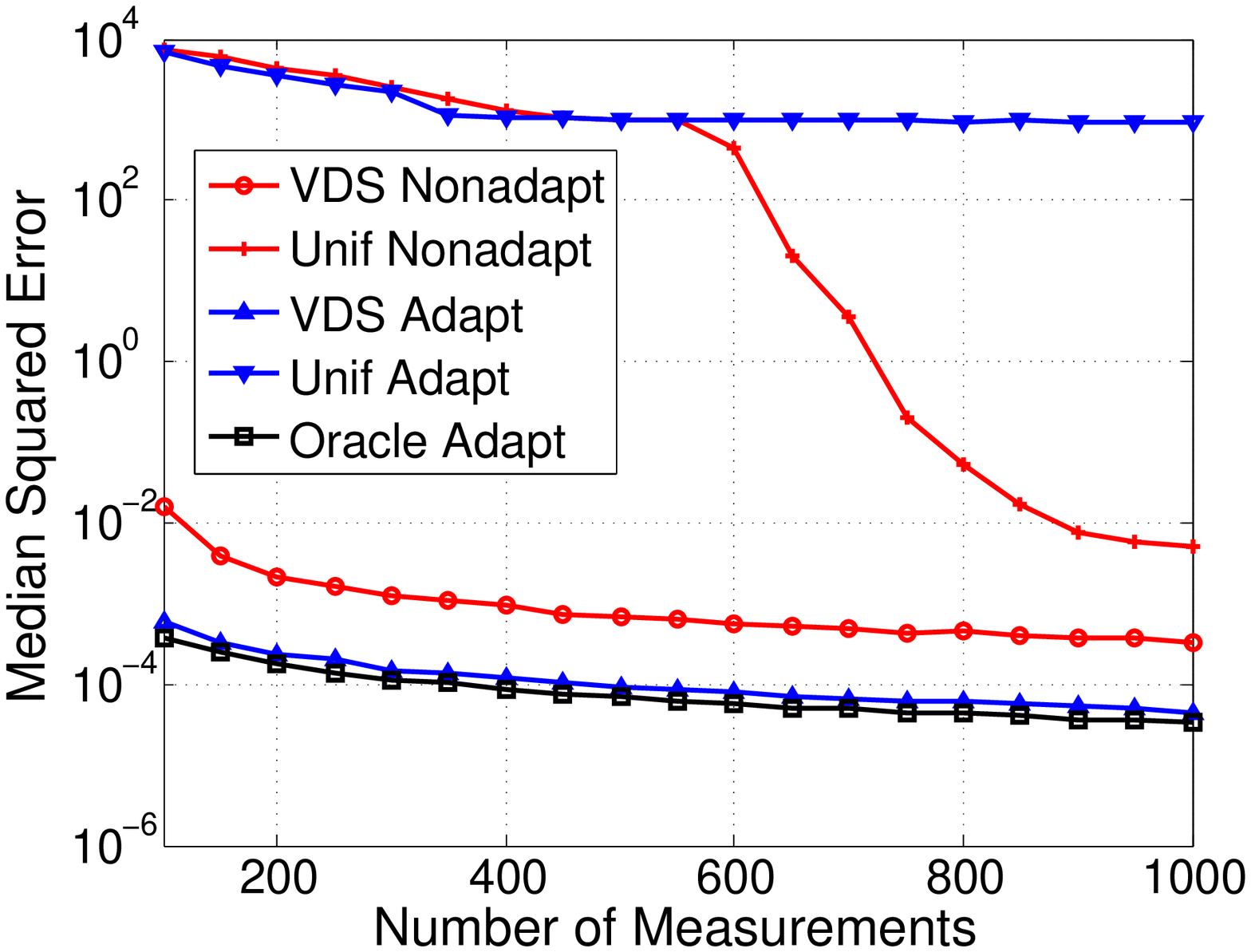} &
\includegraphics[height=2.3in,width=3in]{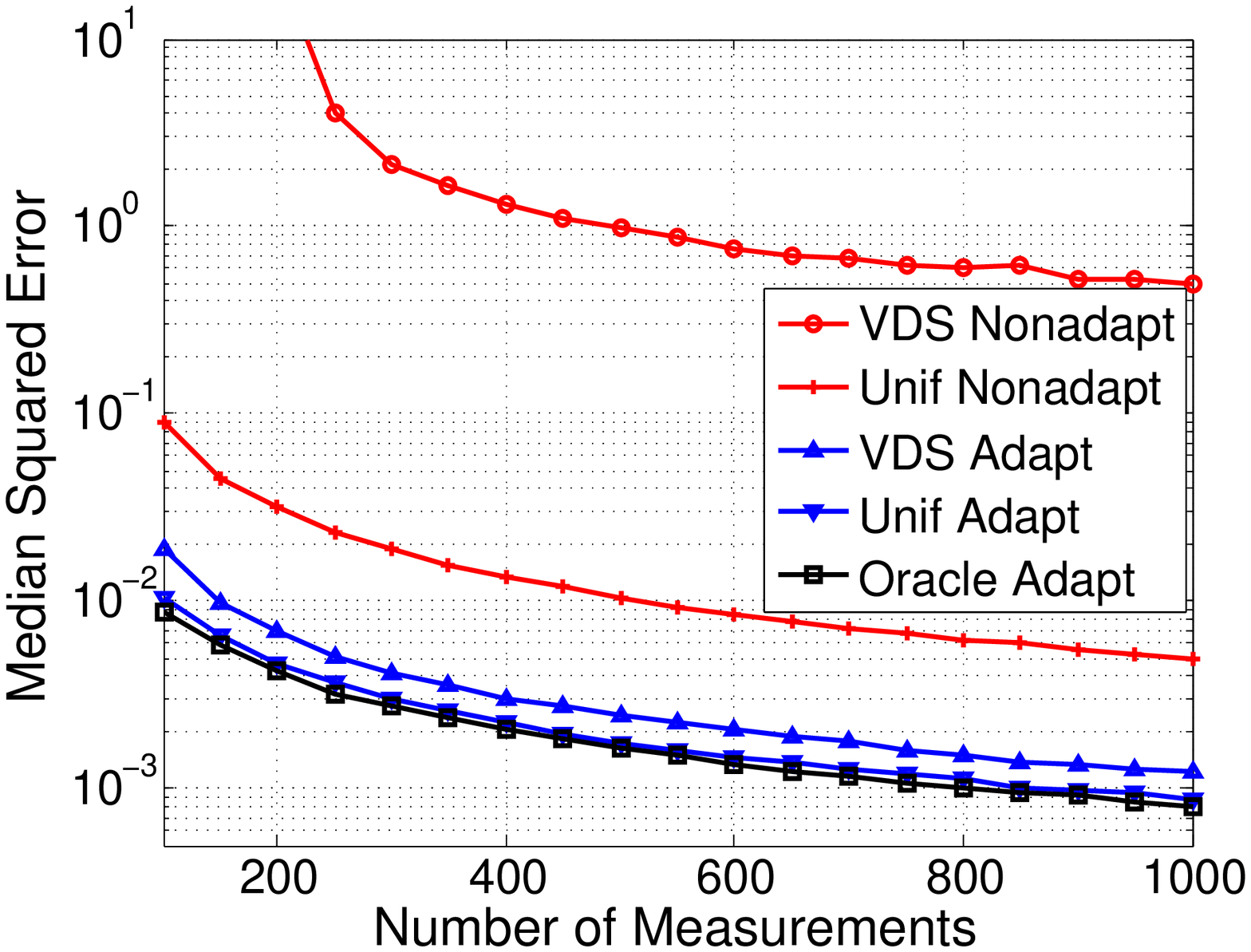}
\end{tabular}
\caption{\small{\editnew{(Large measurement regime)} The median squared error versus the number of measurements $m$ when the nonzero locations of $\boldsymbol{\alpha}$ are selected on a sparse tree (\editnew{left}) or uniformly at random (\editnew{right}). The nonadaptive (red) and adaptive (blue) recovery is shown when either VDS or uniform sampling is used for the nonadaptive measurements, and CoSaMP (\editnew{top}) or $\ell_1$-minimization (\editnew{bottom}) is used; the oracle adaptive (black) recovery is also included for comparison.  
}}
\label{fig::MSEvsMeas}
\end{figure*}

\begin{figure*}[t]
\centering
\begin{tabular}{cc}
\includegraphics[height=2.3in,width=3in]{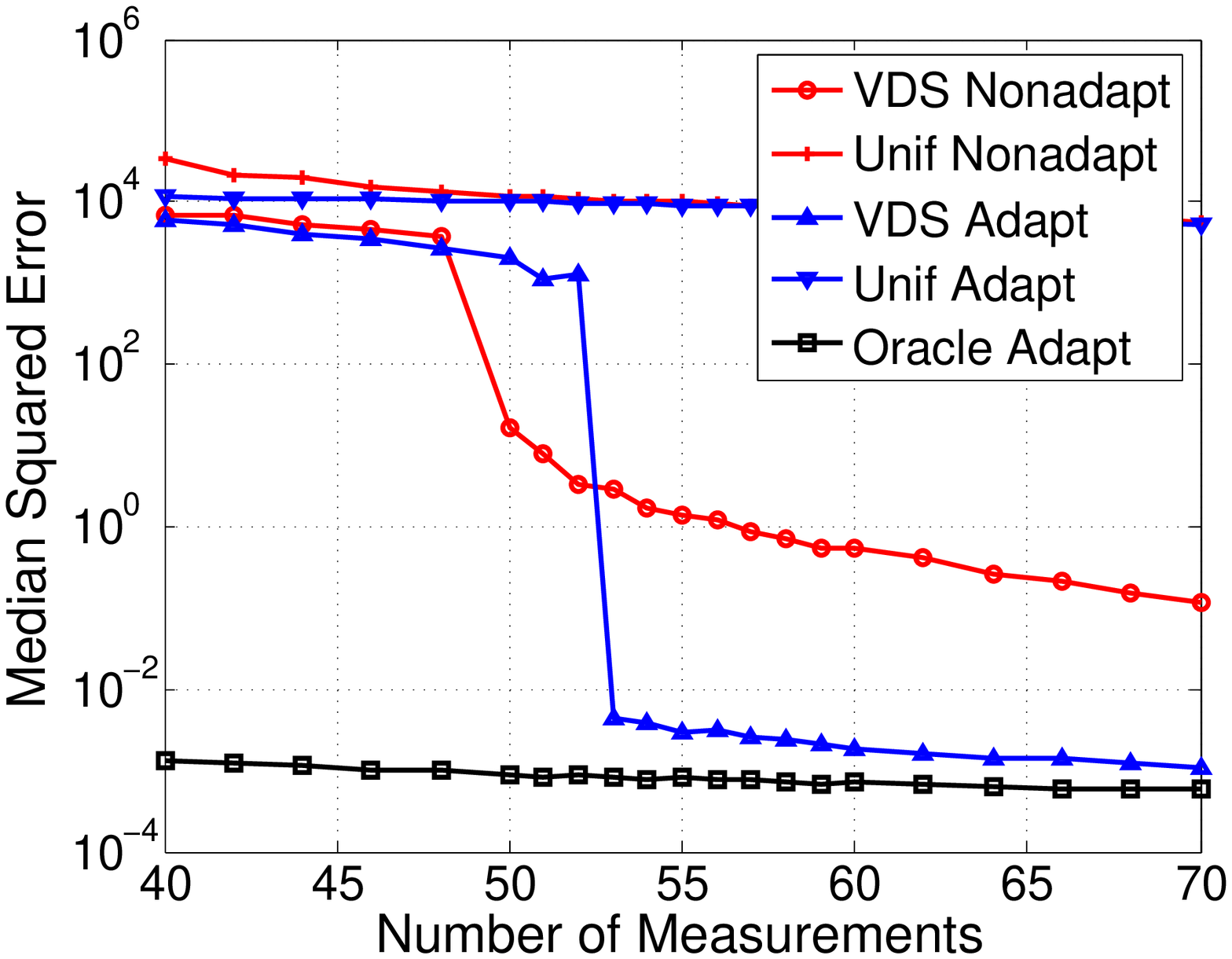} &
\includegraphics[height=2.3in,width=3in]{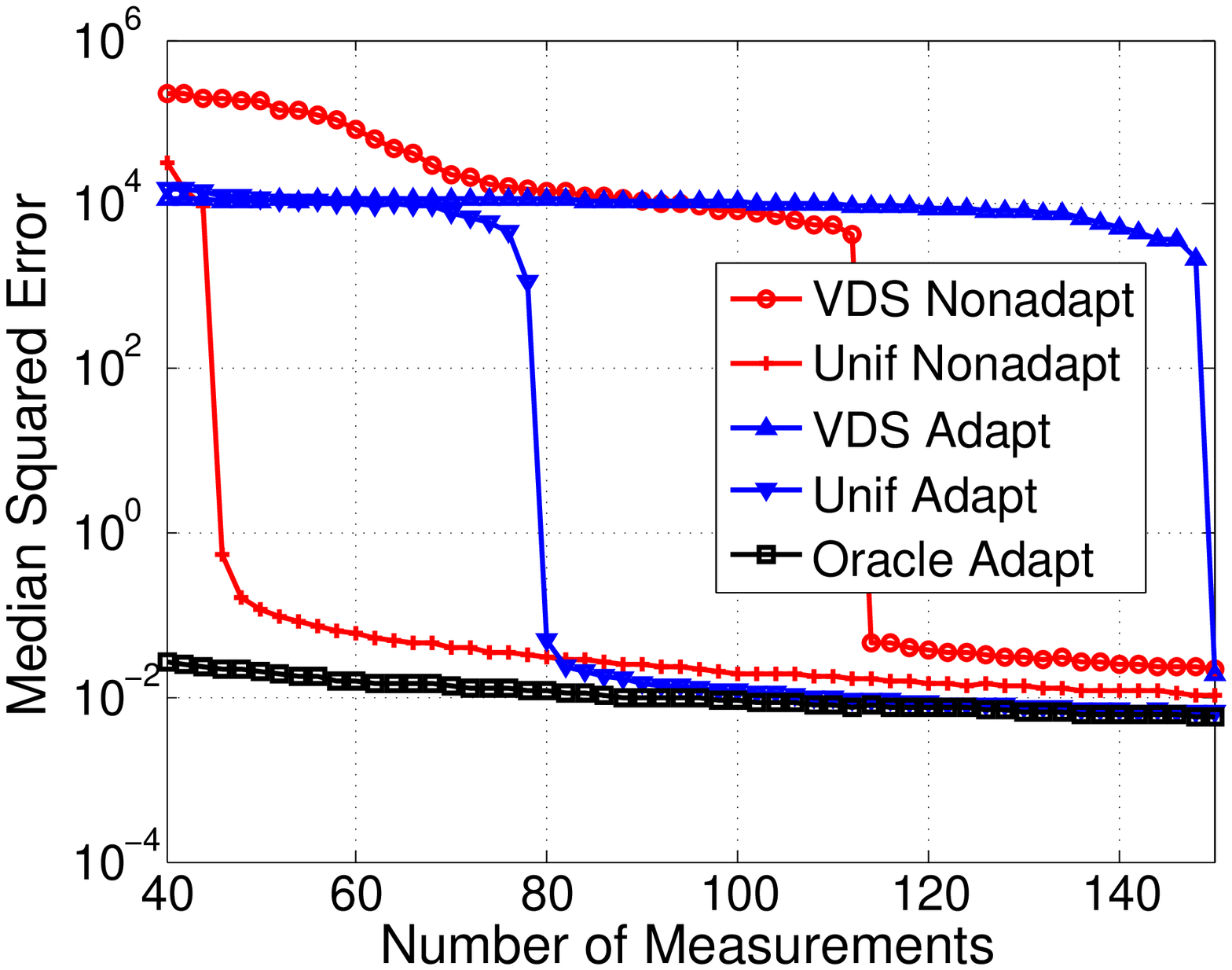}
\end{tabular}
\caption{\small{\editnew{(Small measurement regime) The median squared error versus the number of measurements $m$ when the nonzero locations of $\boldsymbol{\alpha}$ are selected on a sparse tree (left) or uniformly at random (right). The nonadaptive (red) and adaptive (blue) recovery is shown when either VDS or uniform sampling is used for the nonadaptive measurements, and CoSaMP is used; the oracle adaptive (black) recovery is also included for comparison.}
}}
\label{fig::MSEvsMeas_small}
\end{figure*}

\begin{figure*}[t]
\centering
\begin{tabular}{cc}
\includegraphics[height=2.3in,width=3in]{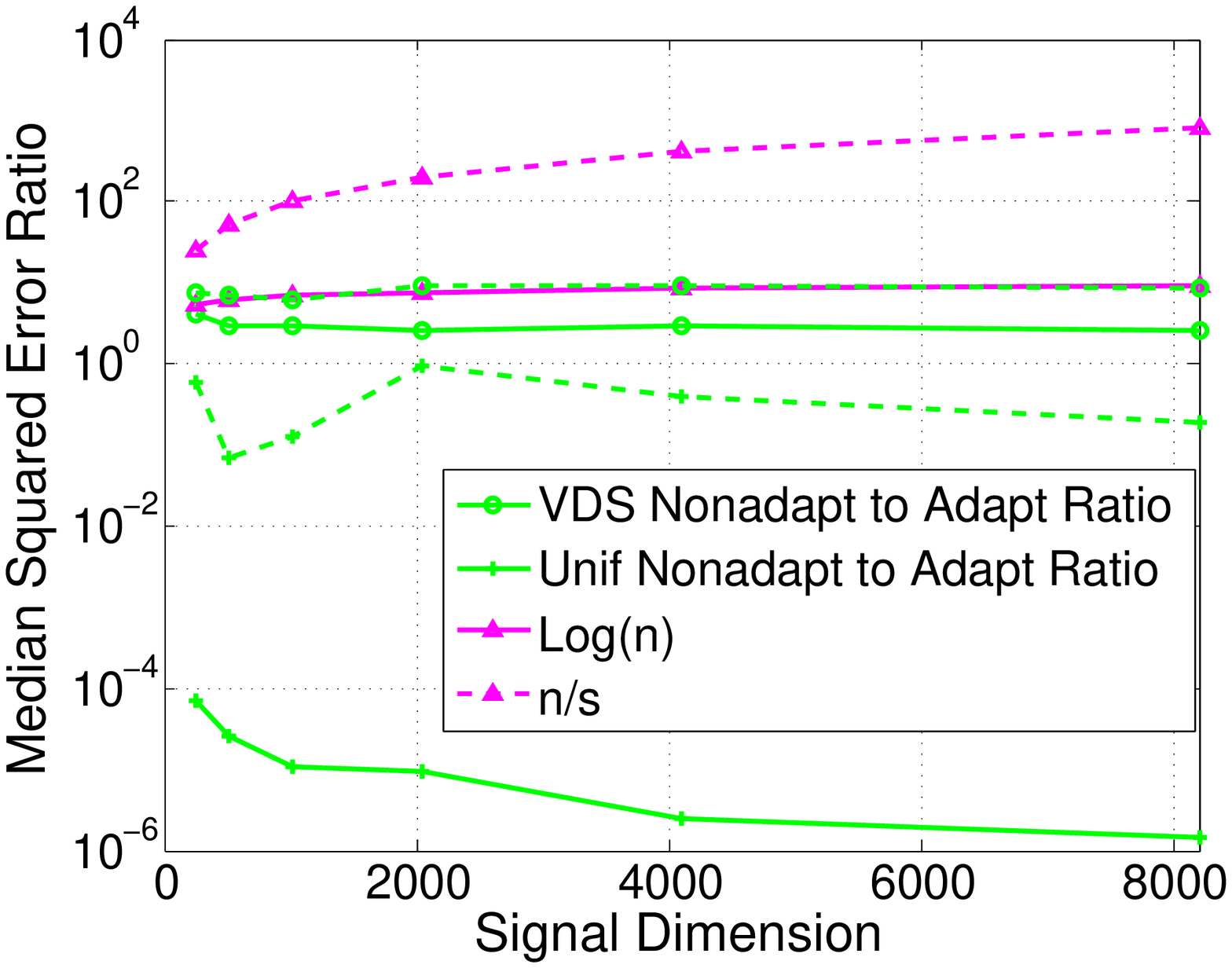} &
\includegraphics[height=2.3in,width=3in]{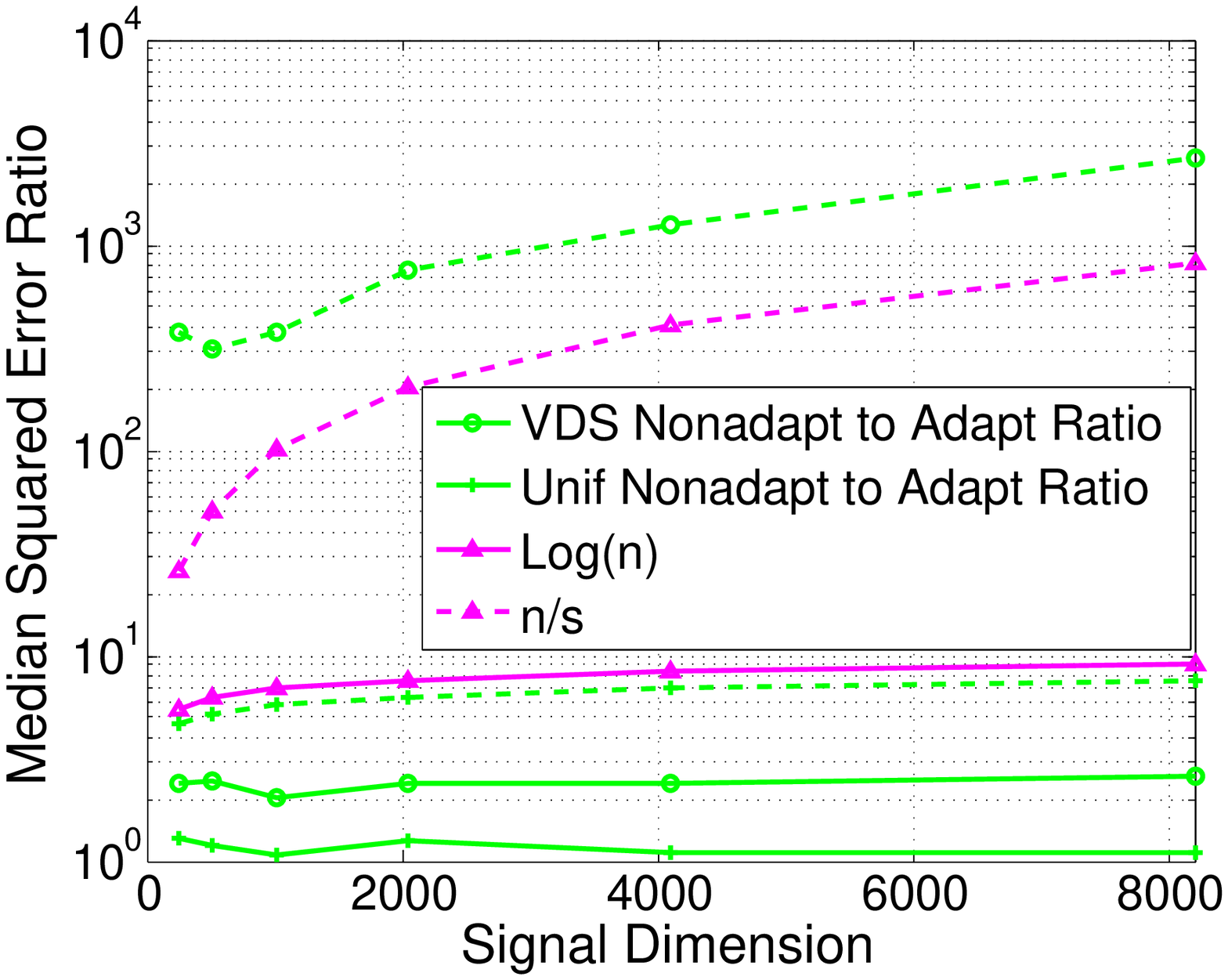}
\end{tabular}
\caption{ \small{The ratio (green) of the nonadaptive median squared recovery error to the adaptive median squared recovery error versus the signal dimension $n$ when the support locations of $\boldsymbol{\alpha}$ are selected on a sparse tree (left) or uniformly at random (right). The ratio is shown when either VDS or uniform sampling is used for the nonadaptive measurements, and CoSaMP (solid line) or $\ell_1$-minimization (dashed line) is used. The curves for $\log n$ (solid magenta) and $n/s$ (dashed magenta) are included for comparison. }}
\label{fig::MSEvsDim}
\end{figure*}

\subsection{Simulations} \label{subsec:simulations}

Here we present a practical implementation of adaptive sensing obtained via the relaxation (\ref{eqn::relaxation}) which we then compare with the results of traditional nonadaptive sensing. 
To implement (\ref{eqn::relaxation}), we use the Templates for First-Order Conic Solvers (TFOCS) software package \cite{BeckeCG_Templates,BeckeCG_TemplatesGuide}. 

For our first two sets of experiments, we set $\mathcal{M}$ to be the ensemble of $n$ measurements from the $n\times
n$ DFT matrix $\F$. We define $\x$ to be a 10-sparse
signal in the Haar wavelet basis (i.e., $\mathbf{\Psi} = \H^\star$) with the
values on the support of $\boldsymbol{\alpha}$ distributed i.i.d as $\oper
N(\sqrt{n},1)$ and the measurement noise $\z$ is distributed as i.i.d.
$\oper N(0,10^{-4})$.\footnote{We have found the adaptive procedure to be robust to the noise level, and compare similarly to the corresponding nonadaptive procedure even for larger noise levels.}  Unless otherwise stated, the signal is of dimension
$n=1024$.  We consider signals whose support is chosen uniformly at random,
and also those whose support obeys a tree structure. 
Briefly, in the latter
case the support is organized on a binary tree, plus an extra node at the top.
The first scaling (or lowest frequency) coefficient has just one child; the
second and further wavelet coefficients have two children each.
This model is characteristic of natural images which tend to have
inter-scale correlations (see
\cite{DuartWB_Fast,CrousNB_Wavelet} for similar wavelet-tree constructions).
An $s$-sparse support is filled by choosing the first scaling location, and then in each of the $s-1$ remaining rounds, choosing one node randomly among the unfilled nodes which currently have a chosen parent.

{\bfseries Nonadaptive sensing.} 
Due to the lack of incoherence between the DFT and Haar bases, it has been observed (and recently theoretically shown \cite{KrahmW_VDS,KrahmerCompressive15}) that so-called {\it Variable-Density Sampling} (VDS) is often preferable to standard uniform random selection of DFT measurements.  In VDS\footnote{Following the experiments in \cite{KrahmW_VDS}, we also do not apply any preconditioning to the sensing matrix.}, sampling can be concentrated on the lower frequencies, producing superior recovery results. We test recovery using either $\ell_1$-minimization \cite{CandeRT_Stable,BergF_Probing,spgl1:2007} or the greedy pursuit CoSaMP \cite{NeedeT_CoSaMP}.

{\bfseries Adaptive sensing.}  
In the more realistic setting, we employ a simple strategy which uses $m/2$ nonadaptive measurements (using either VDS or uniform sampling) to construct an estimate of $\Lambda$.  This is done by executing either $\ell_1$-minimization (followed by thresholding) or CoSaMP. 
We then solve the relaxation (\ref{eqn::relaxation}) using this estimated
support, and the remaining $m/2$ measurements are selected adaptively\footnote{Note that these $m/2$ {\it adaptively} selected measurements are only adapted to the first $m/2$ measurements, but are nonadaptive with respect to each other. That is, only one instance of adaptive measurement selection is being performed. Although in a different context, a similar two-stage approach is also taken in \cite{CastrW_Faster}.} using the distribution given by~\eqref{eq:Spmf}.  To recover the signal, either $\ell_1$-minimization or CoSaMP
is again used to obtain an updated estimate $\hat{\Lambda}$ using all $m$
measurements.\editnew{\footnote{\editnew{Using dependent measurements is of course not justified theoretically, but we found unsurprisingly that using all $m$ measurements gave better empirical results.}}} The final signal coefficient estimate is calculated as $\widehat{{\boldsymbol \alpha}}_{\hat{\Lambda}} = (\F'\H_{\hat{\Lambda}}^*)^\dagger \y$, where $\y$ is
the $m$-dimensional vector of measurements, $\F'$ is the $m\times n$ vector of
DFT measurements selected, and $\H_{\hat{\Lambda}}^*$ is the $n\times 10$
submatrix of $\H^*$ restricted to the columns indexed by $\hat{\Lambda}$. One
could alternatively use the signal estimate returned directly from the recovery algorithm, which we have observed to perform similarly to (or only slightly worse than) our implemented method.

{\bfseries Oracle adaptive sensing.} For sake of comparison, we also consider the case where the true support $\Lambda$ of the signal is known a priori, and the measurements are selected as in the adaptive sensing case using this $\Lambda$. 
Recovery is then performed simply by applying the pseudoinverse: $\widehat{{\boldsymbol \alpha}}_{\Lambda} = (\F'\H_{\Lambda}^*)^\dagger \y$.

Figure \ref{fig::MSEvsMeas} 
compares recovery results over 1000 trials for nonadaptive, adaptive, and oracle adaptive sensing versus the number of measurements $m$\editnew{, where $m$ ranges between 100 and 1000}. 
We see that when the signal is supported on a tree, uniform sampling performs poorly for both nonadaptive and adaptive sensing, as might be expected. The performance of the uniform sampling methods can be understood via the empirical observation that in this case we require roughly $500$ measurements before we can reliably estimate the support.  When using the CoSaMP algorithm, the sudden improvement at $m \approx 500$ for uniform nonadaptive and at $m \approx 1000$ for uniform adaptive (which uses $m \approx 500$ measurements for support identification) corresponds to the threshold where more than half of the trials resulted in a correct support recovery.  In contrast, sampling with VDS offers dramatic improvements for both nonadaptive and adaptive sensing with either reconstruction algorithm, with adaptive sensing performing almost as well as the oracle.  In this case, VDS is already capturing much of the potential improvement offered by adaptivity because the energy of the signal is heavily biased towards the lower frequencies, although adaptivity still results in somewhat improved performance.   
In contrast to the tree-sparse case, when the signal support is selected randomly, uniform nonadaptive sampling actually performs better than VDS, whereas adaptive sensing performs similarly regardless of the type of nonadaptive measurements taken.  Thus if one is not sure of the signal structure in general, adaptive sensing can offer improvements in either case.  This flexibility represents one of the main advantages of adaptive sensing.

\editnew{Figure \ref{fig::MSEvsMeas_small} studies the same setting as Figure \ref{fig::MSEvsMeas} when using CoSaMP for recovery, but focuses on the small measurement regime. These results illustrate that there are regions, however narrow, where the nonadaptive method can succeed while the adaptive method fails. This is expected due to the nature of the adaptive scheme, where only $m/2$ measurements are utilized to identify a support estimate. At some point, the support can be sufficiently estimated with $m$, but not $m/2$, measurements. For tree-sparse signals, nonadaptive sensing with VDS measurements outperforms adaptive sensing with VDS nonadaptive measurements when $m\approx 50$. For uniformly sparse signals, we see this behavior even more clearly for both VDS and uniform sampling.}

The results of our second simulation are shown in Figure \ref{fig::MSEvsDim}, where we compare the ratio of nonadaptive to adaptive sensing recovery over 200 trials against the dimension $n$ of the signal $\x$; the number of measurements used is always $m=0.6n$ \editnew{(rounding when necessary).} We note that since the norms of $\boldsymbol{\alpha}$ and $\z$ both scale with $n$, the SNR remains roughly the same for all signal dimensions $n$.  We observe similar results as Figure~\ref{fig::MSEvsMeas}, demonstrating the behavior holds as a function of dimension.

In our last experiment, we evaluate our adaptive approach on real images.  This scenario differs from previous experiments in two key aspects.
First, the signal of interest is a two-dimensional (2D) image, not a one-dimensional
vector, and thus we use 2D DFT measurements and a
2D discrete wavelet transform as the sparsity basis.
Second, the image is not exactly sparse in any wavelet basis.
Hence, when estimating the sparse support we introduce an additional
(non-Gaussian) source of error, the contribution of the off-support wavelet
coefficients. \editnew{We note that the choice of the parameter $s$, which we have not attempted to optimize,
can have an impact on signal reconstruction.}

The image we use, {\small\texttt{brain.mat}}\footnote{Obtained from
\url{http://www.eecs.berkeley.edu/~mlustig/CS.html}.},
is rescaled to be
$64\times 64$, and is shown in Figure~\ref{fig:mri-comparison}.
We use the Daubechies wavelet with 3 vanishing moments (D6)
in a full 2D decomposition (i.e., $\log_2 64=6$ levels).
\editnew{We set the parameter $s=1000$, which we again note was not tuned nor optimized.}
Additionally, we introduce white Gaussian noise at the level of $\sigma=0.01$
to each measurement. 

The experiment proceeds as follows: in the nonadaptive case, $m$
measurements are taken according to VDS.  The set of recovered wavelet
coefficients are obtained using $\ell_1$-minimization and the image is
reconstructed using the inverse wavelet transform.  Note that the output of $\ell_1$-minimization
is not necessarily exactly $s$-sparse.  The assumed sparsity $s$
guides our choice of the $\ell_2$ error term constraint, but we did no
thresholding afterwards.  We evaluate performance by the median peak
signal-to-noise ratio (PSNR) in dB over 50 trials.

In the adaptive case, $m/2$ VDS measurements are taken as in the previous
nonadaptive case. Then, via the $\ell_1$-minimization reconstruction,
we determine the estimated top $s$ wavelet coefficients in each of 50
trials.  We choose the trial with accuracy (in terms of the number of correctly
identified top $s$ wavelet coefficients) closest to the \editnew{median} accuracy.  Utilizing
the size $s$ support estimate identified, we solve the relaxation (\ref{eqn::relaxation}) and select
the remaining $m/2$ measurements adaptively. Finally, we recover the signal
via $\ell_1$-minimization, and, as before, reconstruct the final wavelet coefficients
using the pseudoinverse. Again, we
evaluate performance by the median PSNR over 50 trials of the adaptive measurement selection.

The adaptive and nonadaptive recovered images of the single trial with the
closest to median PSNR performance using a total of $m=3000$ measurements are given in
Figure~\ref{fig:mri-comparison}. Notice that the PSNR of the adaptive strategy is 28.02 dB, which
exceeds the PSNR of 25.03 dB of the nonadaptive strategy. Visually, the adaptive strategy more
closely resembles the original image.
The median PSNR as the number of measurements
is varied is shown in Figure~\ref{fig:mri-median-plot}.  The plot shows that as
the number of measurements reaches a certain level (roughly above 2000 measurements), the two-stage adaptive
approach begins to exceed the method which is purely nonadaptive. Hence, as long as enough nonadaptive measurements are taken to obtain a sufficient support estimate, the adaptive procedure can improve image reconstruction quality.

We note that adaptive approaches to medical imaging have also been
studied using an alternative Bayesian model for sampling optimization
\cite{SeegerN_OptKspace,NickiP_Bayesian,Seeger_2009speeding}. The work
\cite{Seeger_2009speeding}
studies the optimization of sequential sampling over stacks of neighboring image slices. In future work, it would be interesting to extend our proposed adaptive sampling scheme to this setting. Our method, however, is a framework for general adaptive sensing, not tuned specifically for medical imaging.

\begin{figure}[tb]
\centering
\includegraphics[height=1.3in,width=1.3in]{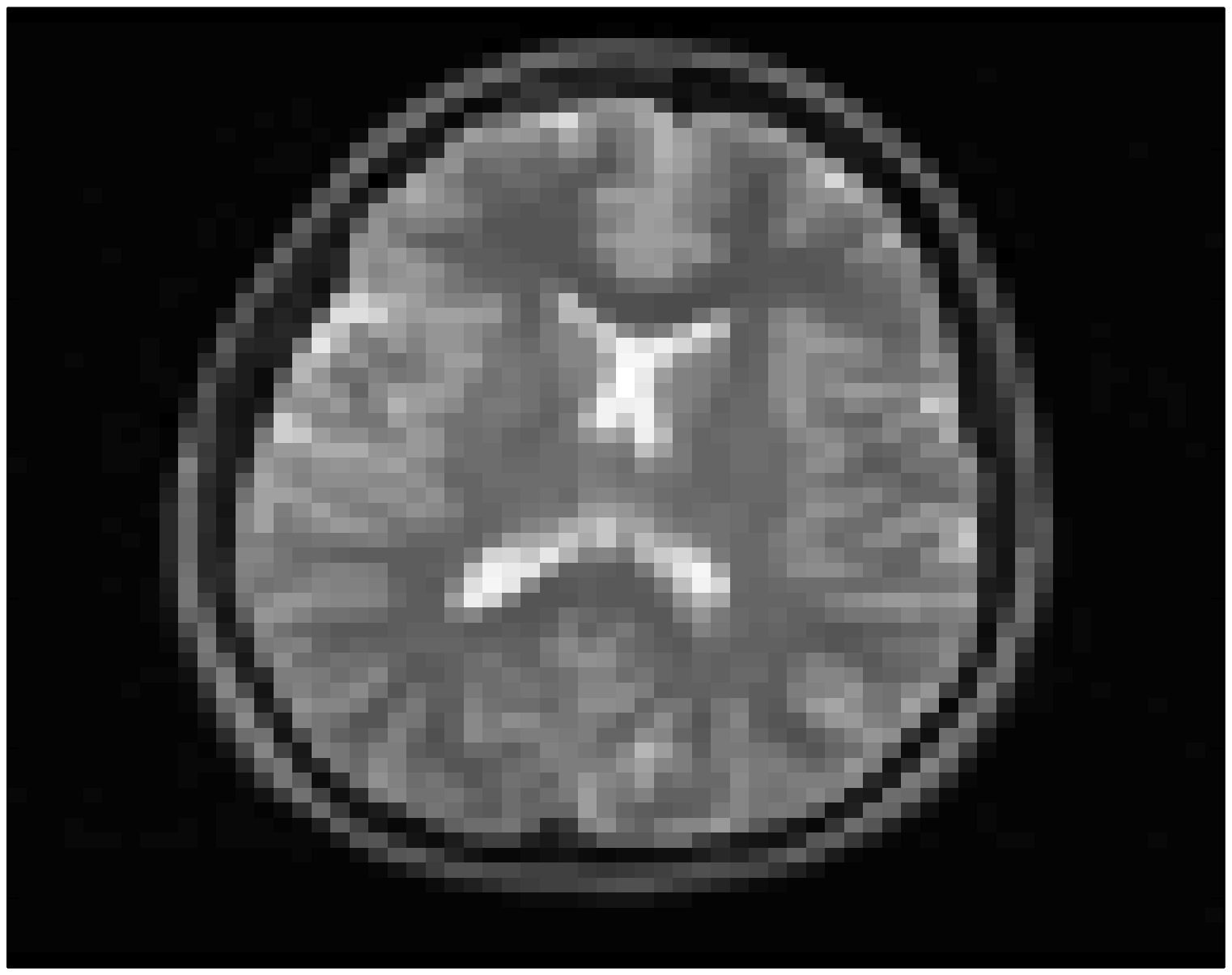} \\
\small{Original}\\
\vspace{2mm}
\begin{tabular}{cc}
\includegraphics[height=1.3in,width=1.3in]{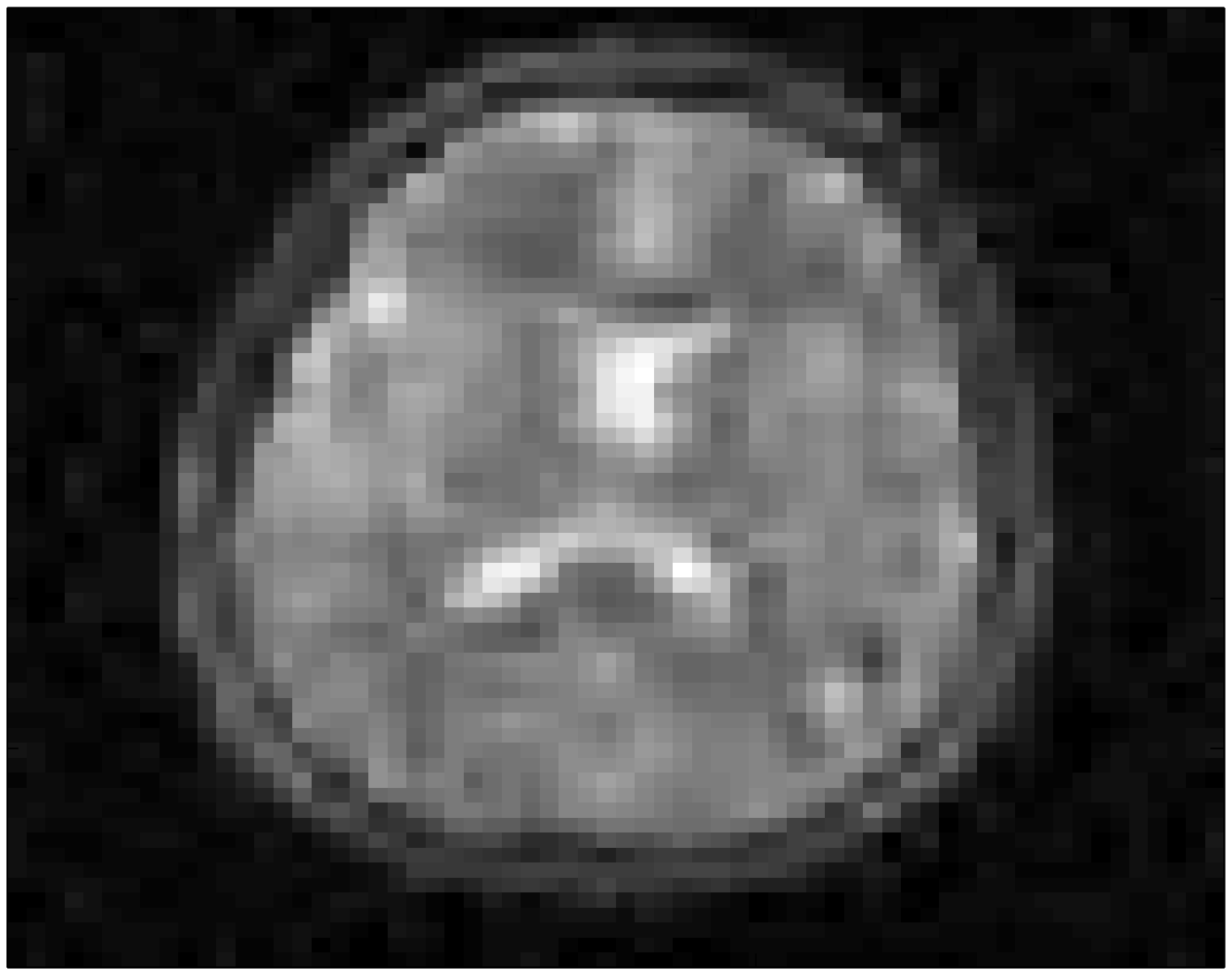}
&
\includegraphics[height=1.3in,width=1.3in]{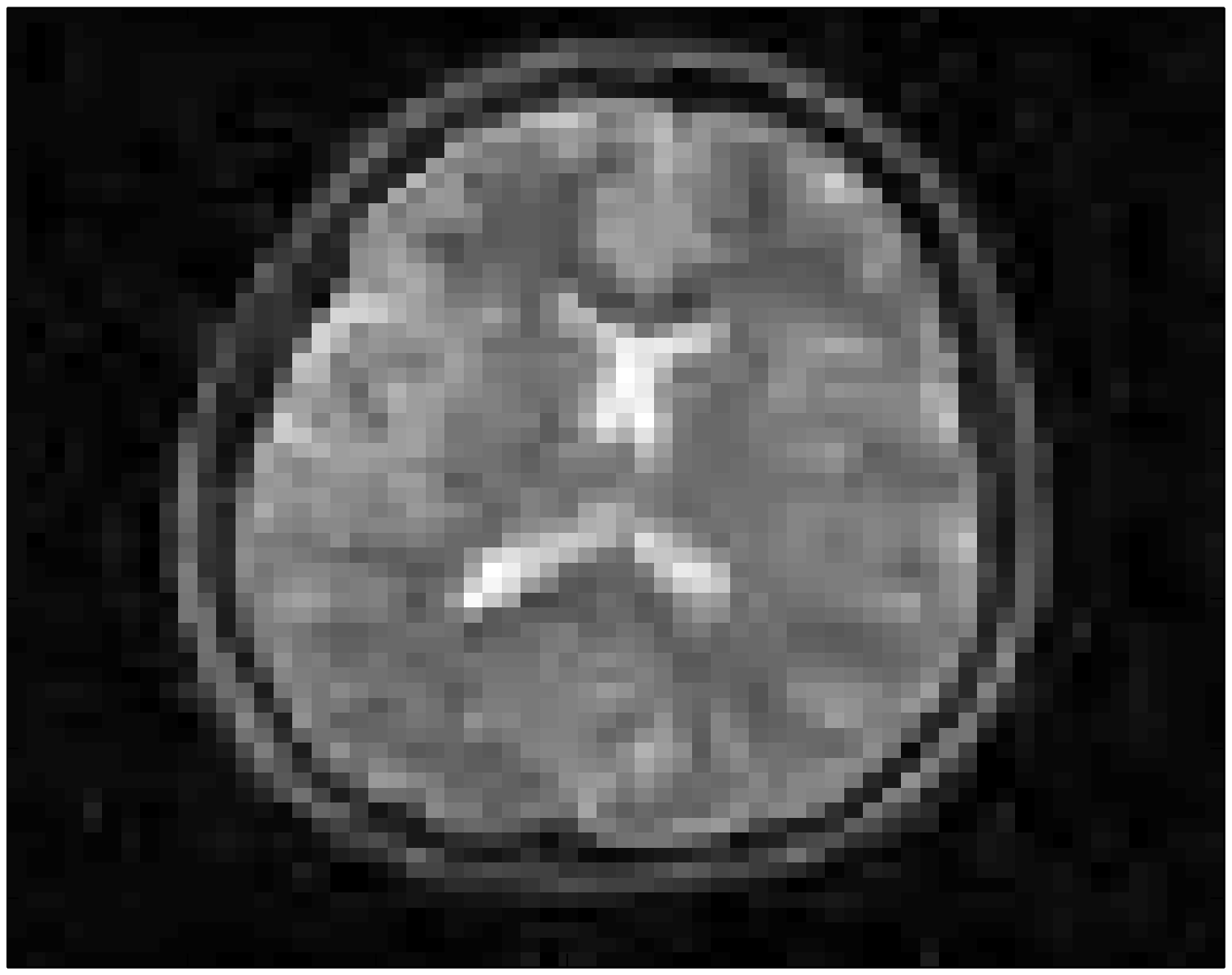}\\
Nonadaptive,& Adaptive,\\
PSNR $=25.03\,$dB.& PSNR $=28.02\,$dB.
\end{tabular}
\caption{(Top) The $64\times 64$ \texttt{brain.mat} image used in our medical
imaging experiments.  Reconstructed images with the closest to median
PSNR among the 50 trials of nonadaptive (bottom left) and adaptive (bottom right) sensing
with $m=3000$ measurements.}
\label{fig:mri-comparison}
\end{figure}

\begin{figure}[tb]\centering
\includegraphics[height=2.3in,width=3in]{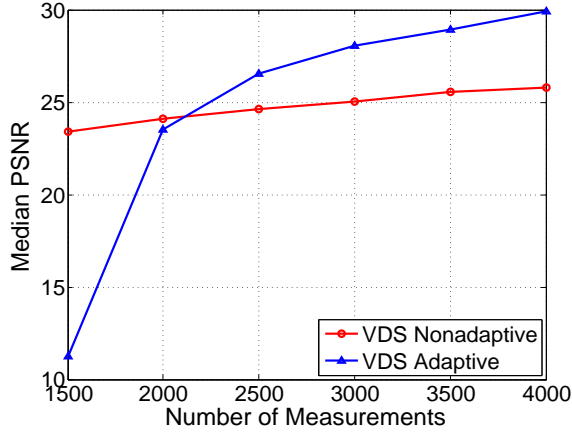}
\caption{The median PSNR versus the number of measurements $m$ over 50 trials
of nonadaptive and adaptive sensing of the $64\times 64$ \texttt{brain.mat}
image.}
\label{fig:mri-median-plot}
\end{figure}

\subsection{Analysis of the 1-sparse case}

We now provide some analytical justification explaining why adaptive sensing can achieve a lower MSE than nonadaptive sensing for the Haar wavelet basis with DFT measurements, but show that the largest gains are realized for a small fraction of the possible signal support sets.  We consider the simple case when $s=1$ and the support is eventually known (either by oracle or by utilizing some method for estimation, as in the above experiments), and use this toy problem as motivational justification for the general setting.  
If we denote the $s$ singular values of $\F'\H_\Lambda^*$ by $\sigma_1\geq \sigma_2 \geq \dots \geq \sigma_s > 0$, then in general we want to minimize
\begin{equation} \label{eqn::sumSingVal}
\| (\F'\H_\Lambda^*)^\dagger \|_F^2  = \sum_{i=1}^s \frac{1}{\sigma_i^2}.
\end{equation}
When $s=1$, (\ref{eqn::sumSingVal}) becomes $\| (\F'\H_\Lambda^*)^\dagger \|_F^2  = \frac{1}{\sigma_1^2}$. However, minimizing this quantity is the same as maximizing $\| \F'\H_\Lambda^* \|_F^2  = \sigma_1^2$. It is easy to see that $\| \F'\H_\Lambda^* \|_F^2$ is maximized when a measurement of $\F$ that is most correlated with $\H_\Lambda^*$ is chosen for every measurement in $\F'$. That is, if such a row can be identified, the best way to sense 1-sparse signals adaptively once the support is known is to simply repeat that measurement until the number of allotted measurements has been reached. Note that $\F'\H_\Lambda^*$ is $m\times 1$ in this case, and is still full rank when selecting the measurements in this way; thus, the theory leading to (\ref{eqn::MSEandNormFH}) still holds. In this setting, we can determine explicitly what the MSE looks like, and provide bounds on the MSE that depend on the support location of the 1-sparse signal. 
The result assuming a known support is provided in Theorem \ref{thm::1sparseMSE}. The result in the more realistic context of adaptive sensing, where the first half of the measurements are selected nonadaptively, immediately follows and is provided in Corollary \ref{cor::1sparseAdap}.

\begin{theorem} \label{thm::1sparseMSE}
Denote by $\x=\H^*\boldsymbol{\alpha}$ the signal of interest, and suppose $\x$ becomes 1-sparse after applying the Haar wavelet transformation $\H$ (that is, $\H\x = \boldsymbol{\alpha}$ and $\| \boldsymbol{\alpha} \|_0 = 1$). Let $\supp(\boldsymbol{\alpha}) = \Lambda$, and suppose the support $\Lambda$ is completely known. Suppose we measure repeatedly with a particular measurement from the $n\times n$ DFT $\F$ defined in (\ref{dft}); denote this measurement by $\mathbf{f}_j$, where $j\in\{0,1,\dots,n-1\}$ is some row index. Then, our observations are of the form
\begin{equation} \label{eqn::meas}
y_i = \inner{\mathbf{f}_j,\H^*_\Lambda\boldsymbol{\alpha}_\Lambda} + z_i,
\end{equation}
for $i = 1,\dots,m$, where the noise $z_i$ are i.i.d. $N(0,\sigma^2)$. Then the MSE
is given by
\begin{equation} \label{eqn::FH MSE thm oracle}
\mathbb{E}\|\widehat{\x}-\x \|_2^2 =   \frac{1/m}{|\inner{\mathbf{f}_j,\H^*_{\Lambda}}|^2}\sigma^2,
\end{equation}
and is bounded by
\begin{equation} \label{eqn::error bound oracle}
\frac{\sigma^2}{m} \leq \mathbb{E}\|\widehat{\x}-\x \|_2^2 \leq \frac{n\sigma^2}{2m},
\end{equation}
where the expectation is taken with respect to $\z$.
\end{theorem}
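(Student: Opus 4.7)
The proof decomposes into deriving the closed-form MSE expression, which holds for any $j$, and then establishing the two numerical bounds; the upper bound is to be read in conjunction with the optimal strategy discussed before the theorem, i.e.\ that $j$ is chosen to maximize $|\langle \mathbf{f}_j, \H^*_\Lambda\rangle|$.

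Since all $m$ rows of $\F'$ coincide with the single DFT row $\mathbf{f}_j$, the product $\F'\H^*_\Lambda$ is a column vector in $\C^m$ with every entry equal to $c:=\langle \mathbf{f}_j, \H^*_\Lambda\rangle$. For a nonzero column $\mathbf{v}\in\C^m$, the Moore--Penrose pseudoinverse is $\mathbf{v}^\dagger = \mathbf{v}^*/\|\mathbf{v}\|_2^2$, hence $\|\mathbf{v}^\dagger\|_F^2 = 1/\|\mathbf{v}\|_2^2 = 1/(m|c|^2)$. Substituting into \eqref{eqn::MSEandNormFH} gives \eqref{eqn::FH MSE thm oracle}. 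The lower bound $\sigma^2/m$ is immediate from Cauchy--Schwarz: the DFT row $\mathbf{f}_j$ has unit $\ell_2$ norm by~\eqref{dft}, and $\H^*_\Lambda$ is a column of the unitary matrix $\H^*$ and hence also unit norm, so $|c|\leq 1$.

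The substantive content lies in the upper bound, which amounts to showing that for every singleton support $\Lambda=\{\lambda\}$ some DFT row achieves $|c|^2\geq 2/n$. For $\lambda=0$, definition~\eqref{eqn::HaarDef1} gives $\H^*_0 = (1/\sqrt{n},\ldots,1/\sqrt{n})^T$ and $j=0$ yields $|c|=1$. For $\lambda=2^p+q-1$ with $p\in\{0,\ldots,\log_2 n-1\}$, definition~\eqref{eqn::HaarDef2} writes $\H^*_\lambda$ as a signed difference of two adjacent indicators of length $L:=n/2^{p+1}$ with amplitude $2^{p/2}/\sqrt{n}$. Evaluating the DFT via geometric sums produces
\[
|\langle \mathbf{f}_j, \H^*_\lambda\rangle|^2 \;=\; \frac{4\cdot 2^p}{n^2}\cdot\frac{\sin^4(\pi jL/n)}{\sin^2(\pi j/n)},
\]
and the choice $j=2^p$ sends $\pi jL/n$ to $\pi/2$, reducing this to $4\cdot 2^p/\bigl(n^2\sin^2(\pi 2^p/n)\bigr)$. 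Setting $k:=\log_2 n-p\in\{1,\ldots,\log_2 n\}$, the desired bound $|c|^2\geq 2/n$ is equivalent to the elementary inequality $\sin^2(\pi 2^{-k})\leq 2^{1-k}$.

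The delicate step is precisely this sine inequality, since the uniform bound $\sin x\leq x$ is not tight at $k=1,2$: both $\sin^2(\pi/2)=1$ and $\sin^2(\pi/4)=1/2$ attain equality with $2^{1-k}$, so those two cases must be handled by direct evaluation and in fact saturate the upper bound $n\sigma^2/(2m)$ at the two deepest Haar levels (which is consistent with this being the worst-case support location). For $k\geq 3$ the quadratic bound gives $\sin^2(\pi 2^{-k})\leq \pi^2 2^{-2k}\leq 2^{1-k}$ since $\pi^2\leq 2^{k+1}$. Combining the cases proves~\eqref{eqn::error bound oracle}.
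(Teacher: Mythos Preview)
Your proof is correct and follows essentially the same route as the paper: derive the MSE via the pseudoinverse of the rank-one column $\F'\H^*_\Lambda$, then bound $|\langle \mathbf{f}_j,\H^*_\Lambda\rangle|$ above by $1$ and below (after maximizing over $j$) by $\sqrt{2/n}$. The paper packages the inner-product computation and the $\sqrt{2/n}$ bound into three lemmas (stated in cosine form via $1-\cos\theta$) whose proofs are omitted as ``elementary trigonometric bounds''; you carry out the same computation in the equivalent sine form, make the explicit choice $j=2^p$, and supply the case analysis $k=1,2$ versus $k\ge 3$ that the paper suppresses---so your argument is in fact more self-contained than the paper's.
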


Note that in standard compressive sensing when we rely on the RIP, the DFT matrix $\F$ is normalized by $\frac{1}{\sqrt{m}}$ rather than $\frac{1}{\sqrt{n}}$. If we make this normalization, the bound in (\ref{eqn::error bound oracle}) becomes
\begin{align*}
\frac{\sigma^2}{n} &\leq \mathbb{E}\|\widehat{\x}-\x\|_2^2 \leq \frac{\sigma^2}{2}.
\end{align*}
Including pre-conditioning and other scalings of course yields an analogous bound.

The proof of Theorem \ref{thm::1sparseMSE} relies on three lemmas that provide bounds for the term $|\inner{\mathbf{f}_j,\H^*_{\Lambda}}|$ appearing in the MSE in (\ref{eqn::FH MSE thm oracle}). Specifically, one term of interest is
\begin{align}
\label{lowerbound}
\min_{\Lambda\in\{0,\dots,n-1\}} \max_{j\in\{0,\dots,n-1\}} | \inner{\mathbf{f}_j, \H^*_{\Lambda}} |.
\end{align}
The maximization over $j$ corresponds to selecting the best DFT measurement $\mathbf{f}_j$, and the minimization accounts for the worst case signal (i.e., the worst case support $\Lambda$). On the other hand, we also want to obtain a value that represents the best case signal 
so we are also interested in
\begin{align}
\label{upperbound}
\max_{\Lambda\in\{0,\dots,n-1\}} \max_{j\in\{0,\dots,n-1\}} | \inner{\mathbf{f}_j, \H^*_{\Lambda}} |.
\end{align}

Before proving Theorem \ref{thm::1sparseMSE}, let us set some notation. The Haar wavelet transform matrix $\H$, defined in (\ref{eqn::HaarDef1}) and (\ref{eqn::HaarDef2}) consists of {\it blocks} of consecutive rows with the same nonzero entry magnitudes.  
Let $1 \leq a \leq\log_2n$ denote the block of $\H$, where $a=1$ corresponds to the $\frac{n}{2}$ rows indexed by $j = \frac{n}{2},\dots,n-1$ (i.e., the ``bottom" half of $\H$), $a=2$ corresponds to the $\frac{n}{4}$ rows indexed by $j= \frac{n}{2}-\frac{n}{4},\dots,\frac{n}{2}-1$, and so on. 
Similarly, for $\H^*$, instead of blocks of {\it rows}, we have blocks of {\it columns}; the block corresponding to $a=\log_2n$ represents the lowest frequency wavelets, and the block corresponding to $a = 1$ represents the highest frequency wavelets.

\begin{proof}[Proof of Theorem \ref{thm::1sparseMSE}]

This proof requires the following three lemmas. Lemmas \ref{lem::sumFourier} and \ref{lem::innerprod} are used to prove Lemma \ref{prop::nogoodadp}, and Lemma \ref{prop::nogoodadp} is used to complete the proof of Theorem \ref{thm::1sparseMSE}.  The lemmas can be derived using elementary trigonometric bounds, and we omit the proofs here. 

\begin{lem}
\label{lem::sumFourier}
Fix $j\in\mathbb{Z}$ where $1\leq j\leq n-1$ and let $a = 1,\dots,\log_2n$. Choose $k\in\mathbb{Z}$, $0\leq k\leq n-\frac{2^a}{2}$.  Then
\begin{align}
\left| \sum_{q=k}^{k+\frac{2^a}{2}-1} e^{-2\pi i j q/n}\right| &= \sqrt{\frac{1-\cos(\frac{2^a \pi j}{n})}{1-\cos(\frac{2\pi j}{n})}}.
\end{align}
\end{lem}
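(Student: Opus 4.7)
The plan is to recognize the sum as a finite geometric series in the complex exponential $\omega = e^{-2\pi i j/n}$, evaluate it in closed form, and then apply the standard identity $|1-e^{i\theta}|^2 = 2(1-\cos\theta)$ to both the numerator and denominator.

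First I would set $N = 2^a/2$ (so that the sum has $N$ terms) and $\omega = e^{-2\pi i j/n}$. Because $1 \le j \le n-1$, we have $\omega \neq 1$, so the geometric series formula is available. Factoring out $\omega^k$ from every term gives
\[
\sum_{q=k}^{k+N-1} e^{-2\pi i j q/n} \;=\; \omega^k \sum_{q=0}^{N-1} \omega^q \;=\; \omega^k \cdot \frac{1-\omega^N}{1-\omega}.
\]

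Next I would take absolute values. Since $|\omega^k|=1$, the shift by $k$ plays no role in the modulus, so
\[
\left| \sum_{q=k}^{k+N-1} e^{-2\pi i j q/n} \right| = \frac{|1-\omega^N|}{|1-\omega|}.
\]
Applying $|1-e^{i\theta}|^2 = 2(1-\cos\theta)$ with $\theta = -2\pi j N/n = -2^a \pi j / n$ in the numerator and $\theta = -2\pi j/n$ in the denominator (and using the evenness of cosine) yields
\[
\frac{|1-\omega^N|}{|1-\omega|} = \sqrt{\frac{2(1-\cos(2^a \pi j/n))}{2(1-\cos(2\pi j/n))}} = \sqrt{\frac{1-\cos(2^a \pi j/n)}{1-\cos(2\pi j/n)}},
\]
which is the stated identity.

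There is really no major obstacle here: the only subtlety is ensuring that $\omega \neq 1$ so that the geometric series formula applies, and that the range of $k$ specified in the hypothesis (namely $0 \le k \le n - 2^a/2$) guarantees the summation index stays within its intended range, but neither of these issues affects the modulus calculation itself. The identity holds uniformly in $k$, which matches the fact that $k$ disappears from the right-hand side of the claimed equality.
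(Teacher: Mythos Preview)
Your proof is correct and is exactly the elementary geometric-series computation one would expect; the paper itself omits the proof, noting only that the lemma ``can be derived using elementary trigonometric bounds,'' which is precisely what you have done.
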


\begin{lem}
\label{lem::innerprod}
Let $\mathbf{f}_j$, $j\in\{0,\dots,n-1\}$, be row $j$ from the $n\times n$ DFT and let $\H^*_\Lambda$ be the inverse discrete Haar wavelet transform restricted to the column indexed by $\Lambda$. Let $a=1,\dots,\log_2n$ denote the block of $\H^*$ and let $\Lambda\in\{1,2,\dots,n-1\}$, $|\Lambda|=1$, be a column in the set corresponding to block $a$. Then,
\begin{align}
\label{eqn::innerprod}
|\inner{\mathbf{f}_j,\H_{\Lambda}^*}| &= \frac{1}{\sqrt{n2^{a-1}}}\frac{1-\cos(\frac{2^a \pi j}{n})}{\sqrt{1-\cos(\frac{2 \pi j}{n})}},
\end{align}
where
$j = 1,\dots,n-1$. When $j=0$,
\[ |\inner{\mathbf{f}_0,\H_\Lambda^*}| =
\begin{cases}
      1 & \Lambda = 0 \\
      0 & \Lambda \in \{1,\dots,n-1 \} .
   \end{cases}
\]
When $\Lambda=\{0\}$,
\[ |\inner{\mathbf{f}_j,\H_0^*}| =
\begin{cases}
      1 & j=0 \\
      0 & j=1,2,\dots,n-1 .
   \end{cases}
\]
\end{lem}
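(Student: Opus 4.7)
The plan is to reduce the inner product to a single geometric sum of Fourier exponentials, to which Lemma~\ref{lem::sumFourier} can be applied directly. First I would expand the inner product using the definitions~(\ref{dft}) and~(\ref{eqn::HaarDef2}). For $\Lambda\in\{1,\dots,n-1\}$ in block $a$, write $\Lambda = 2^p+q-1$ with $p=\log_2 n-a$ and $1\le q\le 2^{\log_2 n-p}$. Column $\Lambda$ of $\H^\ast$ equals row $\Lambda$ of $\H$ (up to conjugation, which is trivial since $\H$ is real) and so is supported on the $2^a$ consecutive indices $k_0,k_0+1,\dots,k_0+2^a-1$, where $k_0 = (q-1)n/2^p$; on the first $2^{a-1}$ of these indices the entry equals $+2^{p/2}/\sqrt{n}$, and on the last $2^{a-1}$ it equals $-2^{p/2}/\sqrt{n}$. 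Substituting this into $\inner{\mathbf{f}_j,\H^\ast_\Lambda}$ gives
\[
\inner{\mathbf{f}_j,\H^\ast_\Lambda} \;=\; \frac{2^{p/2}}{n}\left(\sum_{k=k_0}^{k_0+2^{a-1}-1} e^{-2\pi i jk/n}\;-\;\sum_{k=k_0+2^{a-1}}^{k_0+2^a-1} e^{-2\pi i jk/n}\right).
\]

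Next I would factor the antisymmetric pair of sums into a single sum times a phase. Shifting the index in the second sum by $2^{a-1}$ converts it into $e^{-2\pi i j\,2^{a-1}/n}$ times the first, so the bracketed expression equals
\[
\bigl(1-e^{-i\pi j\,2^{a}/n}\bigr)\sum_{k=k_0}^{k_0+2^{a-1}-1} e^{-2\pi i jk/n}.
\]
Taking absolute values, the magnitude of the geometric prefactor is $\sqrt{2\bigl(1-\cos(2^a\pi j/n)\bigr)}$ by the standard identity $|1-e^{-i\theta}|^2 = 2-2\cos\theta$, and the magnitude of the remaining sum of $2^{a-1}$ consecutive Fourier exponentials is given precisely by Lemma~\ref{lem::sumFourier} with the block length $2^a/2 = 2^{a-1}$. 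Combining these two factors and using $2^{p/2}/n = 1/\sqrt{n\,2^a}$, the $\sqrt{2}$ merges with the denominator to give $1/\sqrt{n\,2^{a-1}}$, and one copy of $\sqrt{1-\cos(2^a\pi j/n)}$ combines with the $\sqrt{2(1-\cos(2^a\pi j/n))}$ from the phase factor to produce $(1-\cos(2^a\pi j/n))$ in the numerator. This matches~(\ref{eqn::innerprod}) exactly.

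Finally I would dispatch the two boundary cases by direct calculation. When $j=0$, $\mathbf{f}_0$ is the constant vector with entries $1/\sqrt{n}$, so $\inner{\mathbf{f}_0,\H^\ast_\Lambda}$ reduces to $1/\sqrt{n}$ times the sum of the entries of row $\Lambda$ of $\H$. For $\Lambda=0$ this gives $1$, and for any $\Lambda\ge 1$ the row has equal numbers of $+$ and $-$ entries of equal magnitude so the sum vanishes. When $\Lambda=0$, $\H^\ast_0$ is the constant vector with entries $1/\sqrt{n}$, and $\inner{\mathbf{f}_j,\H^\ast_0} = \frac{1}{n}\sum_{k=0}^{n-1} e^{-2\pi i jk/n}$, which equals $1$ for $j=0$ and $0$ otherwise. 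The main obstacle, though essentially bookkeeping, is tracking the exponents of $2$ carefully: verifying that the $2^{p/2}/n$ prefactor, the $\sqrt{2}$ from the phase, and the $\sqrt{1-\cos(2^a\pi j/n)}$ coming out of Lemma~\ref{lem::sumFourier} combine cleanly into $1/\sqrt{n\,2^{a-1}}$ times $(1-\cos(2^a\pi j/n))/\sqrt{1-\cos(2\pi j/n)}$, with no stray factors.
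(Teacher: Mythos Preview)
Your proposal is correct. The paper itself does not prove this lemma; it states only that ``the lemmas can be derived using elementary trigonometric bounds, and we omit the proofs here.'' Your argument supplies exactly the intended derivation: expand the Haar column via~(\ref{eqn::HaarDef2}), factor the difference of two consecutive geometric blocks as a phase $(1-e^{-i\pi j 2^a/n})$ times a single length-$2^{a-1}$ sum, apply Lemma~\ref{lem::sumFourier} to that sum, and then combine $|1-e^{-i\theta}|=\sqrt{2(1-\cos\theta)}$ with $2^{p/2}/n=1/\sqrt{n2^a}$ to obtain the stated prefactor $1/\sqrt{n2^{a-1}}$. The boundary cases $j=0$ and $\Lambda=0$ are handled correctly by direct orthogonality.
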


\begin{lem} \label{prop::nogoodadp}
Let $\mathbf{f}_j$, $j\in\{0,\dots,n-1\}$, be a row from the $n\times n$ DFT matrix $\F$ and let $\H^*_\Lambda$ be the inverse discrete Haar wavelet transform restricted to the column indexed by $\Lambda$. Let $\Lambda\in\{0,1,\dots,n-1\}$ so that $|\Lambda|=1$. Then
\begin{equation} \label{eqn::lowerboundFH}
 \min_{\Lambda\in\{0,\dots,n-1\}} \max_{j\in\{0,\dots,n-1\}} | \inner{\mathbf{f}_j, \H^*_{\Lambda}} | = \sqrt{\frac{2}{n}}
\end{equation}
and
\begin{equation}\label{eqn::upperboundFH}
\max_{\Lambda\in\{0,\dots,n-1\}} \max_{j\in\{0,\dots,n-1\}} | \inner{\mathbf{f}_j, \H^*_{\Lambda}} | = 1.
\end{equation}
\end{lem}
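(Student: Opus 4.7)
The plan is to handle the two equalities separately, relying on the explicit formula in Lemma~\ref{lem::innerprod}.

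\textbf{Upper bound (\ref{eqn::upperboundFH}).} This is immediate: since $\mathbf{f}_j$ and $\H_\Lambda^*$ are each unit-norm (columns/rows of unitary matrices), Cauchy--Schwarz gives $|\inner{\mathbf{f}_j,\H_\Lambda^*}|\leq 1$. Equality is attained at $\Lambda=j=0$ by the special case recorded in Lemma~\ref{lem::innerprod}, so the maximum is exactly $1$.

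\textbf{Lower bound (\ref{eqn::lowerboundFH}).} The strategy is to show that for each $\Lambda$ there is some $j$ with $|\inner{\mathbf{f}_j,\H_\Lambda^*}|\geq \sqrt{2/n}$, and to exhibit a $\Lambda$ where this value is actually attained. Split on the block $a$ of $\H^*$ containing $\Lambda$. The cases $\Lambda=0$ (where $|\inner{\mathbf{f}_0,\H_0^*}|=1$) are trivially above $\sqrt{2/n}$, so the interesting analysis is for $\Lambda\in\{1,\dots,n-1\}$, where Lemma~\ref{lem::innerprod} gives
\[
|\inner{\mathbf{f}_j,\H_\Lambda^*}| \;=\; \frac{1}{\sqrt{n\,2^{a-1}}}\cdot\frac{1-\cos(2^a\pi j/n)}{\sqrt{1-\cos(2\pi j/n)}}.
\]
For the finest-scale block $a=1$, this simplifies via $(1-\cos\theta)/\sqrt{1-\cos\theta}=\sqrt{1-\cos\theta}$ to $\sqrt{(1-\cos(2\pi j/n))/n}$, which is maximized at $j=n/2$ with value exactly $\sqrt{2/n}$. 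This simultaneously identifies the minimizing $\Lambda$ and shows the minimax value is at most $\sqrt{2/n}$.

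For $a\geq 2$, I would choose $j=n/2^a$, so that $2^a\pi j/n=\pi$ (making the numerator equal $2$) and $1-\cos(2\pi j/n)=2\sin^2(\pi/2^a)$. The formula then reduces to
\[
|\inner{\mathbf{f}_{n/2^a},\H_\Lambda^*}| \;=\; \frac{\sqrt{2}}{\sqrt{n\,2^{a-1}}\,\sin(\pi/2^a)}.
\]
Comparing with $\sqrt{2/n}$ reduces the claim to the scalar inequality
\[
g(a)\;:=\;2^{(a-1)/2}\sin(\pi/2^a)\;\leq\;1\qquad\text{for all }a\geq 1.
\]
The main (and only genuinely nontrivial) step is verifying this inequality. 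I would check $g(1)=\sin(\pi/2)=1$ and $g(2)=\sqrt{2}\sin(\pi/4)=1$ directly, and then show monotonicity via the double-angle identity
\[
\frac{g(a+1)}{g(a)}\;=\;\frac{1}{\sqrt{2}\,\cos(\pi/2^{a+1})},
\]
which is $\leq 1$ precisely when $\cos(\pi/2^{a+1})\geq 1/\sqrt{2}$, i.e.\ when $a\geq 1$. Hence $g$ is non-increasing on $\{1,2,3,\dots\}$ with maximum value $1$, giving the desired bound for every $a\geq 2$.

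Combining the three cases proves $\max_j|\inner{\mathbf{f}_j,\H_\Lambda^*}|\geq\sqrt{2/n}$ for every $\Lambda$, with equality for $\Lambda$ in block $a=1$, which establishes (\ref{eqn::lowerboundFH}). The monotonicity argument for $g(a)$ is the one place where a little care is needed; everything else is just plugging into Lemma~\ref{lem::innerprod} and choosing the right $j$ for each block.
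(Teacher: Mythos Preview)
Your argument is correct. The paper does not actually give a proof of this lemma---it simply states that Lemmas~\ref{lem::sumFourier}, \ref{lem::innerprod}, and~\ref{prop::nogoodadp} ``can be derived using elementary trigonometric bounds, and we omit the proofs here.'' Your proposal fills in precisely those elementary trigonometric details: Cauchy--Schwarz plus the $\Lambda=j=0$ case handles~\eqref{eqn::upperboundFH}, and for~\eqref{eqn::lowerboundFH} you correctly identify that block $a=1$ achieves the minimax value $\sqrt{2/n}$ at $j=n/2$, while for $a\ge 2$ the choice $j=n/2^a$ combined with the monotonicity of $g(a)=2^{(a-1)/2}\sin(\pi/2^a)$ (verified via the double-angle identity) gives the lower bound. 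This is entirely in the spirit of what the paper intended, so there is nothing to compare against---your write-up simply supplies what the authors left out.
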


Since $|\Lambda| = 1$, $\boldsymbol{\alpha}_\Lambda$ is just a scalar, so that the measurements (\ref{eqn::meas}) can be written as
\begin{equation}
y_i = \inner{\mathbf{f}_j,\H^*_\Lambda}\boldsymbol{\alpha}_\Lambda + z_i.
\end{equation}
This can be concisely written as
\begin{align}
\y = \A \boldsymbol{\alpha}_\Lambda + \z,
\end{align}
where $\A$ is the $m$-dimensional column vector with each entry equal to $\inner{\mathbf{f}_j,\H^*_\Lambda}$. To estimate $\boldsymbol{\alpha}_\Lambda$, we apply $\A^\dagger$ to $\y$. In this case, $\A^\dagger$ is an $m$-dimensional row vector, with each entry equal to $\frac{1}{m\inner{\mathbf{f}_j,\H^*_\Lambda}}$. Therefore,

\begin{align*}
\hat{\boldsymbol{\alpha}}_\Lambda &= \A^\dagger \y
= \A^\dagger(\A \boldsymbol{\alpha}_\Lambda + \z)
\\&= \begin{bmatrix}
 \frac{1}{m\inner{\mathbf{f}_j,\H^*_\Lambda}} &\cdots & \frac{1}{m\inner{\mathbf{f}_j,\H^*_\Lambda}}
 \end{bmatrix}
 \left(\begin{bmatrix}
 \inner{\mathbf{f}_j,\H^*_\Lambda} \\
 \vdots \\
 \inner{\mathbf{f}_j,\H^*_\Lambda}
 \end{bmatrix} \boldsymbol{\alpha}_\Lambda
 +\z\right) \\
 &= \frac{1}{m} \sum_{i=1}^{m}\left(\boldsymbol{\alpha}_\Lambda + \frac{z_i}{\inner{\mathbf{f}_j,\H^*_\Lambda}} \right )
 = \boldsymbol{\alpha}_\Lambda + \frac{1}{m} \sum_{i=1}^{m} \frac{z_i}{\inner{\mathbf{f}_j,\H^*_\Lambda}}.
\end{align*}
Using this, and 
since $\sum_{i=1}^{m}z_i\sim\oper N(0,m\sigma^2)$, we find
\begin{align*}
\mathbb{E}\|\widehat{\x}-\x\|_2^2 &= \mathbb{E}\|\H^*(\widehat{\boldsymbol{\alpha}}-\boldsymbol{\alpha})\|_2^2
\\&= \mathbb{E}\|\widehat{\boldsymbol{\alpha}}-\boldsymbol{\alpha}\|_2^2
= \mathbb{E}|\widehat{\boldsymbol{\alpha}}_{\Lambda} - \boldsymbol{\alpha}_{\Lambda}|^2 \\
&= \mathbb{E}\left|\boldsymbol{\alpha}_\Lambda -\left(\boldsymbol{\alpha}_\Lambda +  \frac{1}{m}\sum_{i=1}^{m} \frac{z_i}{\inner{\mathbf{f}_j,\H^*_\Lambda}}\right)\right|^2
\\&= \mathbb{E}\left|\frac{1}{m}\sum_{i=1}^{m}\frac{z_i}{\inner{\mathbf{f}_j,\H^*_{\Lambda}}}\right|^2
\\&= \left(\frac{1/m}{|\inner{\mathbf{f}_j,\H^*_{\Lambda}}|}\right)^2 \mathbb{E}\left|\sum_{i=1}^{m}z_i\right|^2
\\& = \left(\frac{1/m}{|\inner{\mathbf{f}_j,\H^*_{\Lambda}}|}\right)^2 m\sigma^2
= \frac{1/m}{|\inner{\mathbf{f}_j,\H^*_{\Lambda}}|^2}\sigma^2. \label{eqn::FH MSE} \numberthis
\end{align*}

Applying the bounds from Lemma \ref{prop::nogoodadp} to (\ref{eqn::FH MSE}), we arrive at
$$  \frac{\sigma^2}{m} \leq \mathbb{E}\|\widehat{\x}-\x\|_2^2 \leq \frac{n\sigma^2}{2m}, $$
as desired.
\end{proof}

\begin{cor}\label{cor::1sparseAdap}
Suppose $\x = \H^\star\boldsymbol{\alpha}$ is 1-sparse. Suppose after $\frac{m}{2}$ nonadaptive DFT measurements, the support $\Lambda$ is correctly identified. For the remaining $\frac{m}{2}$ DFT measurements, we measure repeatedly with a particular measurement from the $n\times n$ DFT $\F$; denote this measurement by $\mathbf{f}_j$, where $j\in\{0,1,\dots,n-1\}$ is some row index. Then our observations are of the form (\ref{eqn::meas}) for $i = \frac{m}{2}+1,\dots,m$, where the noise $z_i$ are i.i.d. $N(0,\sigma^2)$. Then the MSE 
is given by
\begin{equation} \label{eqn::FH MSE thm}
\mathbb{E}\|\widehat{\x}-\x \|_2^2 =   \frac{2/m}{|\inner{\mathbf{f}_j,\H^*_{\Lambda}}|^2}\sigma^2,
\end{equation}
and is bounded by
\begin{equation} \label{eqn::error bound}
\frac{2\sigma^2}{m} \leq \mathbb{E}\|\widehat{\x}-\x \|_2^2 \leq \frac{n\sigma^2}{m},
\end{equation}
where the expectation is taken with respect to $\z$.
\end{cor}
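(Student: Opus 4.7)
The plan is to derive Corollary~\ref{cor::1sparseAdap} as an immediate consequence of Theorem~\ref{thm::1sparseMSE}. The key observation is that in the setting of the corollary, once the support $\Lambda$ has been correctly identified from the first $m/2$ nonadaptive measurements, the remaining $m/2$ adaptive measurements all use the identical sensing vector $\mathbf{f}_j$. From the perspective of estimating the (scalar) nonzero coefficient $\boldsymbol{\alpha}_\Lambda$, this is exactly the situation analyzed in Theorem~\ref{thm::1sparseMSE}, only with $m$ replaced by $m/2$.

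Concretely, I would form the estimator $\widehat{\boldsymbol{\alpha}}_\Lambda$ by applying the pseudoinverse of the $(m/2)$-dimensional column vector with entries $\inner{\mathbf{f}_j,\H^*_\Lambda}$ to the vector of the final $m/2$ observations $(y_{m/2+1},\ldots,y_m)$. Running through the same arithmetic as in the proof of Theorem~\ref{thm::1sparseMSE} (noting that $\sum_{i=m/2+1}^m z_i \sim \mathcal{N}(0,(m/2)\sigma^2)$), I obtain
\[
\mathbb{E}\|\widehat{\x}-\x\|_2^2 \;=\; \frac{1/(m/2)}{|\inner{\mathbf{f}_j,\H^*_{\Lambda}}|^2}\sigma^2 \;=\; \frac{2/m}{|\inner{\mathbf{f}_j,\H^*_{\Lambda}}|^2}\sigma^2,
\]
which is precisely \eqref{eqn::FH MSE thm}. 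The bounds in \eqref{eqn::error bound} then follow by plugging the extremal values of $|\inner{\mathbf{f}_j,\H^*_{\Lambda}}|$ established in Lemma~\ref{prop::nogoodadp} into this expression, or equivalently, by making the substitution $m \mapsto m/2$ in \eqref{eqn::error bound oracle}.

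There is no real obstacle here; the result is essentially a rescaling of Theorem~\ref{thm::1sparseMSE}. The only conceptual subtlety worth flagging is that one might reasonably ask whether the initial $m/2$ nonadaptive measurements should also be incorporated into the final coefficient estimate. The statement of the corollary implicitly treats them as being used only for support identification, which is consistent with the two-stage framework of Section~\ref{subsec:simulations} and also yields the more pessimistic upper bound, since folding in the additional measurements could only decrease the MSE.
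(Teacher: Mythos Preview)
Your proposal is correct and matches the paper's approach exactly: the paper presents the corollary with no explicit proof, stating only that it ``immediately follows'' from Theorem~\ref{thm::1sparseMSE}, which is precisely the substitution $m \mapsto m/2$ you carry out. Your remark about not folding in the initial $m/2$ measurements is also consistent with how the paper frames the two-stage procedure.
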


The upper bound on $\mathbb{E}\|\widehat{\x}-\x\|_2^2$ in Corollary \ref{cor::1sparseAdap} is precisely the lower bound from Theorem \ref{thm1} when $s=1$. This means that there is indeed some room for improvement with adaptive sensing when the sparsity basis is the Haar wavelet transform rather than the canonical basis.
Corollary \ref{cor::1sparseAdap} shows that the performance of adaptive sensing, in terms of the MSE, depends on the support location of 1-sparse signals. 
The best adaptive recovery is possible when the support is located on the lowest wavelet frequency ($\Lambda = \{0\}$, or the first Haar wavelet coefficient) while the worst recovery occurs when the support is located on any of the higher wavelet frequencies in block $a=1$ (the latter half of the Haar wavelet coefficients).
This of course matches the intuition based on the correlations in these two
bases. This suggests that structured signals such as those that are
tree-sparse will benefit more from adaptivity than signals that have
a uniformly distributed support.

In light of the discrepancy between \eqref{eqn::lowerboundFH} and \eqref{eqn::upperboundFH}, one wishes to know in some sense, what fraction of signals allow for recovery more like one versus the other.
Figure \ref{fig::ubfhinv} shows how $\max_{j\in\{0,\dots,n-1\}}| \inner{\mathbf{f}_j, \H^*_{\Lambda}} |$ varies by maximizing $\max_{j\in\{0,\dots,n-1\}}| \inner{\mathbf{f}_j, \H^*_{\Lambda}} |$ over $\Lambda$ while successively removing blocks from $\H^*$. Using our notation for blocks, the blocks of $\H^*$ are removed in the following (top-down) order: $\log_2(n),\log_2(n)-1,\dots,1$.  Then, we plot the value of $\max_{j\in\{0,\dots,n-1\}}| \inner{\mathbf{f}_j, \H^*_{\Lambda}} |$ for the remaining submatrix of $\H^*$.  
Hence, we see that the MSE is higher for signals supported on higher wavelet frequencies, and the upper bound of (\ref{eqn::error bound}) is achieved by exactly half of the possible signal support sets, whereas the lower bound of (\ref{eqn::error bound}) is achieved by exactly one of the possible signal supports sets (i.e., $\Lambda = \{0\}$).
Fortunately, the support of natural images tends to be
concentrated on lower-frequency
wavelet coefficients~\cite{petrosian2001wavelets}.

\begin{figure}[tb]
\centering
\includegraphics[height=2.35in,width=3.4in]{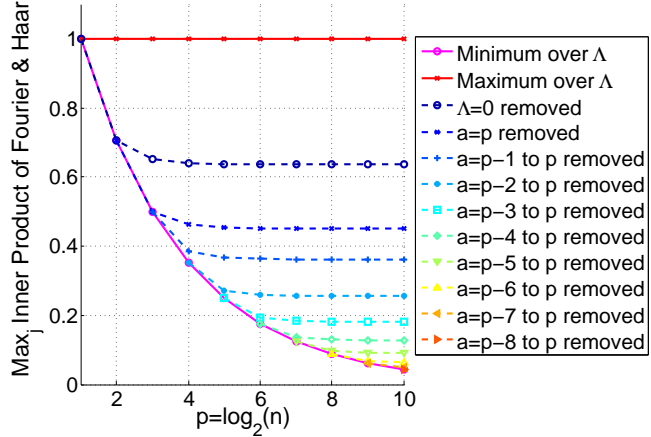}
\caption{The value of $\max_{j\in\{0,\dots,n-1\}}| \inner{\mathbf{f}_j, \H^*_{\Lambda}} |$ is displayed against the (log of the) signal dimension $n=2^p$. The solid magenta curve shows the optimization when minimizing over all possible supports $\Lambda\in\{0,\dots,n-1\}$, given by (\ref{eqn::lowerboundFH}). The solid red curve shows the opposite optimization when maximizing over all possible supports $\Lambda\in\{0,\dots,n-1\}$, given by (\ref{eqn::upperboundFH}). 
The remaining dashed curves show the optimization when maximizing over all supports $\Lambda$ except those in the blocks indicated. }
\label{fig::ubfhinv}
\end{figure}

\section{Discussion}\label{sec::summary}

Adaptive sensing has tremendous potential to improve the accuracy of sparse recovery in a variety of settings.  However, in many practical applications one does not have the freedom to choose arbitrary measurement vectors, but instead must choose from a specified pool of measurements.  One example of particular interest is the setting where measurements must be taken from the Fourier ensemble, as is the case in many medical imaging applications. In this paper we established fundamental limitations on the improvements offered by adaptivity in this setting for certain sparsity bases.  On the other hand, we argued that for other sparsity bases (such as the Haar wavelet basis) the role of adaptivity in the constrained setting is much less straightforward.  We developed a sampling scheme which uses a simple optimization procedure to select measurements adapted to the signal support.  This scheme results in significant improvements once an accurate estimate of the support is obtained, which in practice can be achieved by first dedicating a portion of the measurements to support estimation.   Though this approach is not necessarily provably optimal, it nonetheless demonstrates the potential of adaptive sensing in the constrained setting.  We believe future work in this area can further the understanding of both the limitations of this approach as well as the potential benefits.

\section*{Acknowledgment} 
The authors would like to thank Yaniv Plan and Ran Zhao for stimulating
discussions and help in our theoretical analysis, as well as the reviewers for
their helpful feedback.

\frenchspacing
\bibliographystyle{plain}
\bibliography{bib}

\end{document}